\newcommand{\F}{P}
\theoremstyle{plain}
\newtheorem{theorem}{Theorem}[section]
\newtheorem{proposition}[theorem]{Proposition}
\newtheorem{lemma}[theorem]{Lemma}
\theoremstyle{remark}
\newtheorem{remark}[theorem]{Remark}
\newcommand{\e}{{\mathrm e}}
\newcommand{\1}{{\bf 1}}
\newcommand{\R}{{\mathbb R}}
\newcommand{\x}{{\bf x}}
\newcommand{\y}{{\bf y}}
\newcommand{\C}{{\mathbb C}}
\numberwithin{equation}{section}
\newcommand{\N}{{\mathbb N}}
\newcommand{\Z}{{\mathbb Z}}
\newcommand{\iu}{\mathrm{i}}
\newcommand{\di}{\mathrm{d}}
\DeclareMathOperator{\Tr}{Tr}
\begin{document}

\title[On the magnetic perturbation theory for Chern insulators]{On the magnetic perturbation theory \\ for Chern insulators}

\author{Horia D. Cornean and Massimo Moscolari}

\begin{abstract}
The gauge covariant magnetic perturbation theory is tailored for one-body \linebreak Schr\"odinger operators perturbed by long-range magnetic fields. In this work we present a self-contained exposition of the method, by outlining its technical foundations and discussing the physical heuristics behind the proofs. We apply it in order to prove the stability of spectral gaps and to study the location of the discrete spectrum.  We also analyze the (lack of) continuity with respect to the magnetic field of spectral projections corresponding to finite spectral islands, which is a particularly important situation for systems modelling Chern insulators. Finally, we show how to construct approximate projections that have an explicit dependence with respect to the magnetic field parameter. 
\end{abstract}
\maketitle

\section{Introduction}
	{Magnetic Schr\"odinger operators} are ubiquitous in modern mathematical physics. In view of their physical relevance, they gained a mathematical reputation on their own as a distinguished playground to develop and test new techniques in diverse area of mathematics, from operator and perturbation theory to non-commutative geometry. In this work we show how to deal with long-range magnetic perturbations using the so-called gauge covariant magnetic perturbation theory.

\subsection{Long-range magnetic perturbations}

We begin by describing two simple situations that highlight the singular character of magnetic perturbations. In dimension $d=2$, the free Laplacian $-\Delta$ is a selfadjoint operator on the Sobolev space $H^2(\R^2)$ and its spectrum is purely absolutely continuous and coincides with $[0,\infty)$. When we allow a non-zero constant {magnetic field} oriented perpendicular to the plane, we have to consider the operator $(-\iu \nabla -b A_L)^2$, $b>0$, where \mbox{$A_L:\R^2 \to \R^2$} is the {magnetic vector potential} in the so-called symmetric gauge $A_L(\x)=\frac{1}{2}(-x_2,x_1)$. This is the well-known {Landau Hamiltonian}, which is selfadjoint on the magnetic Sobolev space $H^2_{A_L}(\R^2)$, given by
\begin{equation}
\label{eq:H2A}
H^2_{A_L}(\R^2)=\overline{\left\{ f \in C^{\infty}_0(\R^2) \,  | \,  (-\iu \nabla - bA_L)^2 f \in L^2(\R^2) \right\}}^{\|\cdot \|_{D_{A_L}}}
\end{equation}
where $C^{\infty}_0(\R^2)$ denotes the set of smooth functions with compact support and the closure is taken with respect to the graph norm $\|f\|^2_{D_{A_L}}:=\|f\|^2+ \|(-\iu \nabla - bA_L)^2 f\|^2$.  Its spectrum is purely pure point and given by
$$
\big\{\left(2n+1\right)\, b,\quad  n \in \N \big\}
$$
where each eigenvalue is infinitely degenerate. The analysis of the Landau Hamiltonian \cite{Fock,Landau} is by now textbook material \cite{LandauLifshitz}, see \cite{AvronHerbstSimon} or \cite{MoscolariPanati2019} for a  recent review on the subject. This simple example shows that adding a small magnetic field to the system completely changes both the domains of self-adjointness and the nature of the spectrum. The reader familiar with classical physics will not find this surprising: a charged particle confined to a plane and subjected to a constant magnetic field perpendicular to the plane moves on confined circular orbits. The quantum analogue of such orbits are exactly the bound states corresponding to the Landau levels. From a mathematical point of view, the culprit behind this behaviour is the linear growth at infinity of the magnetic potential. If we formally write the difference $ (-\iu \nabla -bA_L)^2- (-\Delta)$, we get the first order differential operator
$$
W_b:=\iu 2b A_L \cdot  \nabla  + b^2 A_L^2 
$$
whose coefficients grow polynomially at infinity. This strongly suggests  that the ``magnetic perturbation" $W_b$ is neither form, nor operator bounded  with respect to the free Laplacian. In general it can be proved that the perturbed magnetic Hamiltonian fails to converge in the norm resolvent sense to the unperturbed non-magnetic Hamiltonian when the constant magnetic field goes to zero, see \cite[Theorem 6.3]{AvronHerbstSimon}. 

It turns out that most of the singular behaviour induced by long-range magnetic fields is contained in some highly oscillating unimodular phase factors. In the early 2000s, G. Nenciu and the first Author developed a general method which deals with these singular phases in a systematic way \cite{CorneanNenciu1998, Nenciu2002}. This framework is nowadays called {gauge covariant magnetic perturbation theory}. The main purpose of this paper is to outline the main ideas behind this method, together with some physically and mathematically relevant applications.

\subsection{Mathematical framework and results}
In view of the physical applications to two-\linebreak dimensional condensed matter systems, we focus our attention on the Hilbert space $L^2(\R^2)$. However, all the results presented in this work can be extended to the case of $L^2(\R^3)$. 

Let $b \in \R$ and consider the family of operators
$$
H_b:=(-\iu \nabla - b A+\mathcal{A})^2 +V
$$
where $\nabla=(\partial_1,\partial_2)$ denotes the usual gradient operator, $V: \R^2 \to \R$ is a scalar potential, $A: \R^2 \to \R^2$ is a  magnetic vector potential corresponding to a smooth and bounded magnetic {\it field}, not necessarily constant, and $\mathcal{A}:\R^2 \to \R^2$ is a smooth and bounded magnetic vector {\it potential}, which in the periodic case would model a background magnetic field with zero-mean flux. We assume that $V$ is uniformly locally in $L^{2+\delta}(\R^2)$, namely $V \in L^{2+\delta}_{\textrm{u.loc}}(\R^2)$ for some $\delta>0$ (see \eqref{eq:uloc} for the precise definition), while $A,\mathcal{A}\in C^{\infty}(\R^2,\R^2)$. The parameter $b \in \R$ controls the strength of the magnetic perturbation. For the sake of the presentation, in the following we will only consider  the case $b>0$. Notice that the total magnetic field of the system is given by $B(\x) + b(\x)$, where $B(\x)=\nabla \times \mathcal{A}(\x)$ is the background magnetic field and  $b(\x)=b \, \nabla \times A(\x)$ is the magnetic field perturbation. Furthermore, we assume that 
\begin{equation}
\label{eq:HypB}
\sup_{\x\in\R^2}|\partial^{\alpha} b(\x)|\leq b\,  b_\alpha,\quad \alpha\in \mathbb{N}^2,\quad |\alpha|\leq 1,\quad 0\leq b_\alpha<\infty,\quad 0\leq b\leq 1.
\end{equation}

Since $V$ is infinitesimally bounded with respect to the free Laplacian \cite[Theorem XIII.96]{ReedSimonIV} and the magnetic potentials are smooth, by using the diamagnetic inequality \cite[Theorem 2.3]{AvronHerbstSimon} and the standard theory of Schr\"odinger operators we get that $H_b$ is essentially selfadjoint on $C^\infty_0(\R^2)$ and its domain of selfadjointness is the magnetic Sobolev space $H^2_A(\R^2)$, see \eqref{eq:H2A} for the definition of the second magnetic Sobolev space.

We will consider an unperturbed Hamiltonian $H_{b_0}$, corresponding to $b=b_0$ where $b_0$ might be zero,  and we are interested in what happens when $b$ varies around $b_0$. Standard questions in perturbation theory concern the location and behaviour of the spectrum of the operator $H_b$ as $b$ varies. In particular:
\begin{enumerate}
    \item \label{StabilityGaps} Let $I \subset \R$  be a closed set contained in the resolvent set $\rho(H_{b_0})$. Is $I$ still contained in the resolvent set of $\rho(H_b)$ for a suitably small $|b-b_0|$? This question is usually rephrased as: are the {\emph{spectral gaps}} of the operator $H_{b_0}$ stable with respect to long-range magnetic perturbations?
    \item \label{LocationEigenvalue} Let $e_0$ be an isolated eigenvalue of $H_{b_0}$ of finite multiplicity. What is the fate of $e_0$ as $b$ changes?
\end{enumerate}

These questions have been thoroughly addressed in the literature by using several technical tools and under various regularity assumptions on the scalar and magnetic potentials:
\begin{itemize}
    \item Question \ref{StabilityGaps} has been first answered by Avron and Simon \cite{AvronSimon} for the case of Schr\"odinger operators with periodic scalar potentials perturbed by a constant magnetic field, and then in full generality by Nenciu \cite{Nenciu1986}, see also \cite{Nenciu1991}.  \item Question \ref{LocationEigenvalue} can also be tackled in several ways. Already in the standard textbook  by Kato \cite{Kato}, it is shown that  isolated eigenvalues of finite multiplicity of $H_{b_0}$ can either stay as an isolated eigenvalue for $H_b$, or split in different eigenvalues according to the multiplicity of the original one. Detailed asymptotic expansions have been proved by Avron, Herbst and Simon, see for example \cite{AvronHerbstSimon}, also in the setting of the Zeeman effect, namely considering the perturbation of bound states of hydrogen-like atoms. Moreover, a convergent ``power series"  expansion in $(b-b_0)$ where the coefficients still depend on $b$ has been given by Briet and the first Author \cite{BrietCornean2002} and more recently by Savoie \cite{Savoie}, see also \cite{Nenciu2002} for the case of a simple isolated eigenvalue.  

\end{itemize}

Besides the {location of the spectrum} of the magnetic operator, there are other important questions that have been thoroughly analyzed, such as the spatial decay of the eigenfunctions \cite{CorneanNenciu1998}, the regularity of the integral kernels of various functions of the magnetic operators \cite{Simon1982,CorneanNenciu2009}, the nature of the spectrum and the selfadjointness of the magnetic operator with singular potentials just to mention a few \cite{BrietCornean2002}. A successful strategy to solve these types of problems has been the gauge covariant magnetic perturbation theory: In the following, we will briefly review the fundamental aspects of this method and, as a demonstrative example, we show how to answer Questions~\ref{StabilityGaps} and \ref{LocationEigenvalue} in a unified framework.

In addition to its mathematical importance, the stability of spectral gaps plays a crucial role in the modeling of topological insulators, particularly in the modeling of the so-called {Chern insulators}. The mathematics of topological insulators is one of the most active research topics in the mathematics of quantum mechanics. Loosely speaking, topological insulators are peculiar materials that are insulating in their bulk, namely their bulk model has a spectral gap, but support edge states with a quantized conductance. Furthermore, the archetypal example of topological insulators is given by materials that exhibit the integer quantum Hall effect, whose main mathematical model is provided by the Landau Hamiltonian. 
Notice that the spectral gap of the bulk model implies only the vanishing of the direct conductivity. Indeed, the transverse conductivity can be non-zero and its values must be quantized. The connection between the quantized bulk transverse conductivity and the quantized conductance of the edge states is at the core of the bulk-edge correspondence \cite{SchulzBaldesKellendonkRichter,KellendonkSchulzBaldes,ElbauGraf,ElgartGrafSchenker}, see also the  recent work \cite{CorneanMoscolariTeufel} for a proof of bulk-edge correspondence using gauge covariant magnetic perturbation theory and a more detailed discussion of the literature.

Building upon the pioneering works of Avron, Seiler, Simon \cite{AvronSeilerSimon1983,AvronSeilerSimon} and Bellissard, van Elst, and Schulz-Baldes \cite{Bellissard1986, BellissardVanElstSchulzBaldes}, in the last decades it has been shown that the bulk transport properties of topological insulators are encoded in their Fermi projection (see for example \cite{MarcelliMoscolariPanati} and the references therein), that is, the {spectral projection} onto the occupied states of the system. In Chern insulators with sufficiently weak disorder, the Fermi energy lies in a spectral gap and the projection onto the occupied states of the system is exactly the spectral projection onto the states below the spectral gap. In this framework, the key quantity is the Chern number (or its non-periodic generalization, the Chern character) of the Fermi projection: Let $P$ be a spectral projection of $H_b$ with an integral kernel $P(\x;\y)$ that is exponentially localized and $X_i$, $i \in \{1,2\}$ be the position operators. Then one can show (\cite{MarcelliMoscolariPanati}, see also \cite{AvronSeilerSimon}) that the double commutator $\left(P[[P,X_1],[P,X_2]]\right)$ is a locally trace class operator with an exponentially localized integral kernel such that the limit
\begin{equation}
\label{eq:Chern}
C(P)=\lim_{L\to \infty} \frac{2\pi \iu}{4L^2} \int_{[-L,L]^2} \left(P[[P,X_1],[P,X_2]]\right)(\x;\x) \,\mathrm{d}\x 
\end{equation}
exists and it is an integer. The quantity $C(P)$ is called the Chern character of the projection $P$, which can also be interpreted as the trace per unit surface of the double commutator $\iu \left(P[[P,X_1],[P,X_2]]\right)$. The name  ``Chern" originates from the fact that \eqref{eq:Chern} coincides with the Chern number of the Bloch bundle generated by $P$, whenever $H_b$ commutes with a unitary representation of the group $\Z^2$ provided by magnetic translations \cite{Panati}. If $P$ is the Fermi projection with Fermi energy in a spectral gap, its integral kernel is exponentially localized, see \eqref{eq:PExpo}, and $C(P)$ is connected to the quantum transport coefficients of the physical system modelled by $H_b$. In particular, the Hall conductivity of the system computed in the linear response regime is proportional to $C(P)$ (see for example the recent papers \cite{DeRoeckElgartFraas,MarcelliMonaco} and references therein). For this reason, these materials are also called Chern insulators. Since the integer quantum Hall effect is mainly a two-dimensional phenomenon (even though other types of topological insulators exist in all dimensions), we restrict our analysis to magnetic Hamiltonians in $L^2(\R^2)$.

In Section \ref{sec:StabilityGaps} we address Question \ref{StabilityGaps} by showing the stability of spectral gaps under small general magnetic field perturbations. Such stability is a consequence of the identity
\begin{align*}
(H_b-z {\bf 1})^{-1}=S_\epsilon(z)\big (\1+T_\epsilon(z)\big )^{-1}.
\end{align*}
where $S_\epsilon(z)$ and $T_\epsilon(z)$ are bounded operators with an explicit dependence on the magnetic field parameter $b$ ($\epsilon$ is defined to be $b-b_0$), see also \eqref{eqn:T(z)}. 

In Section \ref{sec:Projections}, we exploit gauge covariant magnetic perturbation theory in order to control the magnetic dependence of the Fermi projection, by describing a general perturbation scheme which was recently successfully applied by the Authors and their collaborators to the analysis of the transport and topological properties of topological insulators \cite{CorneanMonacoMoscolari2019,MoscolariMonaco,CorneanMonacoMoscolari2021, CorneanMoscolariTeufel}. 

As a by-product, in Section \ref{sec4} we show that the spectral projections of magnetic Schr\"odinger operators are always continuous with respect to the magnetic field in the strong operator topology, while in Section \ref{sec5} we show that they fail to be continuous in the uniform topology at least when they are topologically non-trivial, that is whenever their Chern character is not zero. An explicit example for such a failure is given by the projection on the ground state of the purely magnetic Landau Hamiltonian. 

In Section \ref{sec:Eigenvalue} we focus our attention on Question \ref{LocationEigenvalue}, and we show that magnetic spectral projections onto isolated eigenvalues are always smooth with respect to the magnetic field, even in the trace norm topology. Furthermore, for simple eigenvalues  we show that they obey a fixed point equation \eqref{hcj10} induced by a Feshbach formula, whose iterations produce an asymptotic expansion in $b-b_0$ in a very efficient way.

\medskip

Before going into the details of the theory, we single out two properties which will help us in developing the method:
\begin{enumerate}
    \item The resolvent $(H_b-z\1 )^{-1}$ is an integral operator with an integral kernel that is jointly continuous outside of the diagonal and such that 
    for every $K_b$ compact subset of the resolvent set of $H_b$ there exist $C,\alpha >0$ independent of $0\leq b\leq 1$ such that for all $\x\neq \x' \in \R^2$:
    \begin{equation}
\label{IntKernelResolvent}
\sup_{z \in K_b}\left| (H_b-z {\bf 1})^{-1}(\x;\x') \right| \leq C \ln\left (2+\|\x-\x'\|^{-1}\right )e^{-\alpha\|\x-\x'\|} \, .
\end{equation}
    \item Moreover, for all $\x\neq \x' \in \R^2$:
    \begin{equation}
\label{eq:estimateResolvent}
\sup_{z\in K_b}\left| (-\iu \nabla_\x -bA(\x))(H_b-z {{\bf 1}})^{-1}(\x;\x') \right| \leq C  \|\x-\x'\|^{-1}e^{-\alpha\|\x-\x'\|} \,.
\end{equation}
 \end{enumerate}

In the rest of the paper we take the properties \eqref{IntKernelResolvent} and \eqref{eq:estimateResolvent} for granted, and in \mbox{Appendix~\ref{sec:Estimates}} we will briefly explain how to prove such statements  and decay estimates for the magnetic Schr\"odinger operators we are considering. We stress that these assumptions are fairly general and rely on the the fact that we are dealing with elliptic partial differential operators. Indeed, a suitable magnetic perturbation theory has also been developed for Dirac operators and operators defined on domains with boundary, see \cite{CorneanMoscolariTeufel,CorneanMoscolariSorensen}.

\section{The quasi-inverse and stability of spectral gaps}
\label{sec:StabilityGaps}
In this section we review the basics of gauge covariant magnetic perturbation theory, the presentation is based on \cite{CorneanNenciu1998,CorneanNenciu2000, Nenciu2002, CorneanMonacoMoscolari2021}. Such a perturbation theory is inspired by the well-known Peierls substitution and it relies on the gauge invariance of the electromagnetic theory. A heuristic discussion of the theory is postponed to Section \ref{sec:Heuristic}.

Since the addition of $\mathcal{A}$ generates a perturbation that is relatively bounded with respect to the free Laplacian, without loss of generality we may assume that the bounded magnetic vector potential $\mathcal{A}(\x)=0$ while the magnetic field is purely long-range and of the form  $$b(\x)=b\,  \mathfrak{b}(\x),\quad 0\leq b\leq 1,$$  
where $\mathfrak{b}$ obeys \eqref{eq:HypB} with $b=1$. Notice that $b\in \R$ and $\mathfrak{b}$ is real valued. Let us start by constructing a special family of magnetic vector potentials $\left\{{A}_\y: \R^2 \to \R^2\right\}_{\y \in \R^2}$ using the two-dimensional vector notation $\x=(x_1,x_2) $ 
\begin{equation}
\label{eq:TGauge}
{A}_\y (\x):= \left( \int_0^1 s  \, \mathfrak{b}(\y+s(\x-\y))\mathrm{d}s \right) \; (-x_2+y_2,x_1-y_1).
\end{equation} 
In the following we will use the shorthand notation ${A}_\mathbf{0}=A$.
Notice that, if we consider our system embedded in $\R^3$, then the magnetic field is given by the vector field $\vec{b}(\x)=(0,0,b(\x))$ and $A_\mathbf{0}$ is, with a little abuse of notation, nothing but the usual transversal gauge 
\begin{equation*}
A_\mathbf{0}(\vec{x})=  \left( \int_0^1 s \mathfrak{b}(s\vec{x})\mathrm{d}s \right) \wedge \vec{x} .
\end{equation*}

By using \eqref{eq:HypB} and \eqref{eq:TGauge}, we get the estimate
\begin{align*}
\|\partial_\x^\alpha {A}_\y (\x) \|\leq b_\alpha \;\|\x-\y\|,\quad \alpha\in \mathbb{N}^2,\quad |\alpha|\leq 1. 
\end{align*}

We have the following simple but fundamental lemma. 
\begin{lemma}
\label{lemma:MPhase}
The family of magnetic vector potentials $\left\{A_\y: \R^2 \to \R^2\right\}_{\y \in \R^2}$ generates the same magnetic field $\mathfrak{b}(\x)$
\begin{equation*}
\nabla \times {A}_\y(\x)=\nabla \times {A}(\x)= \mathfrak{b}(\x) \, .
\end{equation*}
As a consequence, the magnetic potentials $\left\{{A}_\y\right\}_{\y \in \R^2}$ differ by a pure gauge factor, namely there exists $\varphi_\y(\x) \equiv \varphi(\x,\y)$ such that
\begin{equation}\label{hc1}
{A}(\x) - {A}_\y(\x) =\nabla_\x \varphi(\x,\y)
\end{equation}
and $\varphi$ is given by the line integral of the vector potential ${A}$ along the straight-line segment connecting $\y$ and $\x$
\begin{equation}
\label{eq:PeierlsPhaseAux}
\varphi(\x,\y):=\int_{\y}^{\x}{A} \cdot \mathrm{d}\mathbf{s} \equiv \int_{0}^{1}  \, {A}(\y+s(\x-\y)) \cdot (\x-\y) \,\mathrm{d}s .
\end{equation}
Furthermore, we have the expression
\begin{equation}
\label{eq:PhaseAux}
\varphi(\x,\y)=-\left(\int_{0}^{1} \int_{0}^{1} \, t\mathfrak{b}(t(\y+s(\x-\y))) \mathrm{d}t \mathrm{d}s \right) (x_1y_2-x_2y_1) \, .
\end{equation}
\end{lemma}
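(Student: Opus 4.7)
The plan is to verify the three assertions of the lemma by direct computation, since $\R^2$ is simply connected and everything in sight is smooth with polynomially controlled derivatives thanks to \eqref{eq:HypB} and \eqref{hc21'}. First I would handle the curl identity. Writing $f(\x,\y) := \int_0^1 s\,\mathfrak{b}(\y+s(\x-\y))\,\mathrm{d}s$, the definition reads $A_\y(\x) = f(\x,\y)\bigl(-(x_2-y_2),\, x_1-y_1\bigr)$, so
\[
\nabla \times A_\y(\x) = 2 f(\x,\y) + (\partial_{x_1}f)(x_1-y_1) + (\partial_{x_2}f)(x_2-y_2).
\]
Differentiation under the integral gives $\partial_{x_i} f = \int_0^1 s^2 (\partial_i \mathfrak{b})(\y + s(\x-\y))\,\mathrm{d}s$, and the sum of the last two terms telescopes into $\int_0^1 s^2 \frac{\mathrm{d}}{\mathrm{d}s}\mathfrak{b}(\y + s(\x-\y))\,\mathrm{d}s$. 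One integration by parts in $s$ yields $\mathfrak{b}(\x) - 2f(\x,\y)$, and the assertion $\nabla \times A_\y = \mathfrak{b}$ follows after cancellation. Taking $\y=\mathbf{0}$ covers $A=A_\mathbf{0}$ as well.

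Next I would establish the gradient identity \eqref{hc1} with $\varphi$ defined by the line integral \eqref{eq:PeierlsPhaseAux}. Since $A-A_\y$ is curl-free by Step~1 and $\R^2$ is simply connected, the difference is automatically a gradient; the content is that the particular antiderivative produced by integrating along the segment from $\y$ to $\x$ is the right one. Differentiating $\varphi(\x,\y) = \int_0^1 \sum_j A_j(\y+s(\x-\y))(x_j - y_j)\,\mathrm{d}s$ in $x_i$ splits into a $\delta_{ij}$-term and a term $s(\partial_i A_j)(\cdots)(x_j - y_j)$. Using the two-dimensional identity $\partial_i A_j = \partial_j A_i + \epsilon_{ij}\mathfrak{b}$, the symmetric part combines into $s\frac{\mathrm{d}}{\mathrm{d}s} A_i(\y+s(\x-\y))$, which after one integration by parts in $s$ cancels the $\delta_{ij}$-term and leaves the boundary value $A_i(\x)$. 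The antisymmetric part contributes $\epsilon_{ij}(x_j - y_j) f(\x,\y)$, which is exactly $-(A_\y(\x))_i$ by the definition of $A_\y$.

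Finally, for the explicit formula \eqref{eq:PhaseAux} I would simply substitute $A(\z) = \bigl(\int_0^1 t\,\mathfrak{b}(t\z)\,\mathrm{d}t\bigr)(-z_2, z_1)$ with $\z = \y + s(\x-\y)$ into the line integral. The inner product $A(\z)\cdot(\x-\y)$ carries a bracket $-z_2(x_1-y_1) + z_1(x_2-y_2)$ whose $s$-dependent piece is $s[-(x_2-y_2)(x_1-y_1) + (x_1-y_1)(x_2-y_2)] = 0$, leaving only the $s$-independent piece $-(x_1 y_2 - x_2 y_1)$; this pulls out of both integrals and produces \eqref{eq:PhaseAux}.

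I do not foresee a conceptual difficulty; the whole lemma is a direct computation. The only real bookkeeping is in Step~2, where one must cleanly separate the antisymmetric part $\epsilon_{ij}\mathfrak{b}$ of $\partial_i A_j$ (which reproduces the transversal correction $A_\y$) from the symmetric part (which, after integration by parts, collapses to $A$). The same integration-by-parts trick drives both Steps~1 and~2, reflecting the fact that the transversal gauge is engineered so that $A_\y(\x) \cdot (\x - \y) = 0$, which is what makes a radial line integral of $A$ reconstruct the scalar phase connecting $A$ to $A_\y$.
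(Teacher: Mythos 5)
Your argument is correct, and all three computations go through as you describe. The overall scaffolding matches the paper's (curl identity, identification of $\varphi$, substitution to get \eqref{eq:PhaseAux}), but your treatment of the middle step is genuinely different. The paper does not differentiate the line integral at all: it first invokes the abstract fact that the curl-free field $\mathbf{f}=A-A_\y$ on the simply connected plane admits a potential normalized by $\varphi_\y(\y)=0$, then evaluates that potential by the gradient theorem as the line integral of $\mathbf{f}$ along the segment $\gamma(\x,\y)$, and finally observes that $A_\y(\mathbf{s})\cdot \mathrm{d}\mathbf{s}=0$ along that segment (transversality of the gauge), so only the $A$-part of $\mathbf{f}$ survives and \eqref{eq:PeierlsPhaseAux} drops out with no further computation. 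You instead prove \eqref{hc1} head-on by differentiating $\int_0^1 A(\y+s(\x-\y))\cdot(\x-\y)\,\mathrm{d}s$ under the integral sign, splitting $\partial_i A_j$ into its symmetric and antisymmetric parts and integrating by parts in $s$; this is essentially an explicit Poincar\'e-lemma computation. Your route is more self-contained (it never needs the existence statement for potentials of curl-free fields) and makes the mechanism by which the curl $\mathfrak{b}$ regenerates the transversal potential $A_\y$ completely explicit; the paper's route is shorter and isolates the transversality property $A_\y(\x)\cdot(\x-\y)=0$ as the single structural reason the construction works, which is the property reused later in the heuristic discussion. Both are valid; your closing remark correctly identifies transversality as the common underlying reason.
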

\begin{proof}
By explicit computation using the chain rule for the derivatives, one can check that the magnetic potentials ${A}_\y (\x)$ and ${A}(\x)$ generate the same magnetic field.

Then, define 
$$
\mathbf{f}(\x,\y):= {A}(\x) - {A}_\y(\x) .
$$
We have that there exists $\varphi_\y(\x)$ such that $\mathbf{f}(\x,\y)=\nabla_\x \varphi_\y(\x)$. Clearly, $\varphi_\y(\x)$ is not unique. If we impose that $\varphi_\x(\x)=0$, it is enough to define
$$
\varphi_\y(\x)=\varphi \left(\x, \y\right)=\int_{y_1}^{x_1} f_1\left(t, x_2 ; \y\right) \mathrm{d} t +\int_{y_2}^{x_2}  f_2\left(y_1, t ; \y\right) \mathrm{d}t \, .
$$
Indeed, consider the straight-line segment 
$$
\gamma\left(\x, \y\right):=\left\{\x'(t)=\y+t\left(\x-\y\right) \, | \, t \in [0,1]\right\}.
$$
The line integral of $\mathbf{f}$ along $\gamma\left(\x, \y\right)$ is given by 
$$
\int_{\gamma\left(\x, \y\right)} \mathbf{f}(\mathbf{s};\y) \cdot \mathrm{d}\mathbf{s} = \int_{\gamma\left(\x, \y\right)} \left({A}(\mathbf{s}) - {A}_\y(\mathbf{s})\right) \cdot \mathrm{d}\mathbf{s} = \int_{\gamma\left(\x, \y\right)} {A}(\mathbf{s})  \cdot \mathrm{d}\mathbf{s}
$$
where in the last equality we have used the fact that $A_\y(\x) \cdot (\x-\y) =0$ in view of the transversality of the gauge, see \eqref{eq:TGauge}. Thus, since $\mathbf{f}(\x,\y)=\nabla_\x \varphi_\y(\x)$, we get
\begin{equation}
\label{eq:Aux12}
\varphi(\x,\y)=\int_{\gamma\left(\x, \y\right)} {A}(\mathbf{s})  \cdot \mathrm{d}\mathbf{s} =\int_{\y}^{\x}{A} \cdot \mathrm{d}\mathbf{s}\, .
\end{equation}
The expression \eqref{eq:PhaseAux} follows by inserting \eqref{eq:TGauge} in \eqref{eq:Aux12}. 
\end{proof}

The phase $\varphi: \R^2 \times \R^2 \to \R$ defined in Lemma \ref{lemma:MPhase} is usually called Peierls phase. Important properties of the Peierls phase are recalled in the following proposition.

\begin{proposition}
The function $\varphi :\R^2 \times \R^2 \to \R$ satisfies the following properties:
\begin{enumerate}
    \item $\varphi(\x,\y)=-\varphi(\y,\x)$.
    \item $|\varphi(\x,\y)| \leq C\, \|\x\| \|\y\|$
    \item 
    Let $\langle \x,\y,\,\x' \rangle $ be the oriented triangle with vertices in $\x,\y$ and $\x'$ and define the flux of $\mathfrak{b}(\x)$ through this triangle by  $\mathrm{fl}(\x,\y,\x')=\int_{\langle \x,\y,\,\x' \rangle }  \;  \mathfrak{b}(\tilde{\x}) \cdot  \di \sigma( \tilde{\x})$. Then there exists a constant $C<\infty$ such that
    \begin{equation}
    \label{eq:CompositionId}
    \begin{aligned}
    &\varphi(\x,\y)+\varphi(\y,\x')-\varphi(\x,\x') = \mathrm{fl}(\x,\y,\x') \, ,\\
    &|\mathrm{fl}(\x,\y,\x')| \leq C\, \|\x-\y\| \|\y-\x'\|.
    \end{aligned}
    \end{equation}
\end{enumerate}
\end{proposition}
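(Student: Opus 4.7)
The three assertions can each be established by direct computation from the two representations of the Peierls phase given in Lemma~\ref{lemma:MPhase}; I would treat them in order, since part~(1) streamlines part~(3).

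For (1) the quickest route is via \eqref{eq:PeierlsPhaseAux}: the oriented straight segment from $\y$ to $\x$ is the reverse of the one from $\x$ to $\y$, so the corresponding line integrals of $A$ differ by a sign. Equivalently, one can invoke \eqref{eq:PhaseAux}, perform the change of variables $s \mapsto 1-s$ inside the integral to verify that the double integral $\int_0^1\int_0^1 t\,\mathfrak{b}(t(\y+s(\x-\y)))\,\mathrm{d}t\,\mathrm{d}s$ is invariant under swapping $\x$ and $\y$, and observe that the prefactor $x_1y_2-x_2y_1$ is antisymmetric.

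For (2) I would bound \eqref{eq:PhaseAux} pointwise. Hypothesis \eqref{eq:HypB} (applied with $b=1$) gives $\|\mathfrak{b}\|_\infty \leq b_0$ for some finite $b_0$, whence the absolute value of the double integral is at most $b_0\int_0^1\int_0^1 t\,\mathrm{d}t\,\mathrm{d}s = b_0/2$. The remaining factor satisfies $|x_1y_2-x_2y_1| \leq \|\x\|\|\y\|$ by the two-dimensional cross-product inequality (equivalently, Cauchy--Schwarz applied to $\x$ and the $90^\circ$ rotation of $\y$). Multiplying yields $|\varphi(\x,\y)| \leq \tfrac12 b_0\|\x\|\|\y\|$.

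For (3) I would combine the three line integrals and invoke Stokes' theorem. By \eqref{eq:PeierlsPhaseAux} and part~(1),
\begin{equation*}
\varphi(\x,\y) + \varphi(\y,\x') - \varphi(\x,\x') = \int_{\y}^{\x} A\cdot\mathrm{d}\mathbf{s} + \int_{\x'}^{\y} A\cdot\mathrm{d}\mathbf{s} + \int_{\x}^{\x'} A\cdot\mathrm{d}\mathbf{s},
\end{equation*}
where each integral is along the straight segment indicated. The concatenation of these three segments is the boundary of the filled-in triangle with vertices $\x,\y,\x'$, traversed in a definite orientation, and Stokes' theorem identifies this closed line integral with the integral of $\nabla\times A = \mathfrak{b}$ over the triangle; this is exactly the first line of \eqref{eq:CompositionId}, provided the orientation convention for $\langle \x,\y,\x'\rangle$ is fixed consistently. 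The bound on $|\mathrm{fl}(\x,\y,\x')|$ then follows from $\|\mathfrak{b}\|_\infty < \infty$ and the elementary identity $\mathrm{Area}(\langle \x,\y,\x'\rangle) = \tfrac12|\det[\x-\y,\,\x'-\y]| \leq \tfrac12\|\x-\y\|\|\y-\x'\|$.

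I do not foresee any analytic difficulty: all estimates reduce to the uniform bound on $\mathfrak{b}$ and elementary planar geometry. The only real obstacle is bookkeeping, namely making sure the orientation convention attached to the symbol $\langle \x,\y,\x'\rangle$ matches the signs appearing in \eqref{eq:CompositionId}; once this is settled, the proof is essentially one application each of a change of variables, a cross-product bound, and Stokes' theorem.
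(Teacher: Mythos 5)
Your proposal is correct and follows essentially the same route as the paper's (very terse) proof: antisymmetry from the line-integral representation \eqref{eq:PeierlsPhaseAux}, the polynomial bound from \eqref{eq:PhaseAux} together with the uniform bound on $\mathfrak{b}$ from \eqref{eq:HypB}, and the composition identity via Stokes' theorem applied to the closed triangular circuit. You supply more detail than the paper does, and you are right that the only delicate point is matching the orientation convention of $\langle \x,\y,\x'\rangle$ to the signs in \eqref{eq:CompositionId}.
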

\begin{proof}
    The anti-symmetry follows from \eqref{eq:PeierlsPhaseAux}, and the polynomial bound from \eqref{eq:PhaseAux}. The ``composition identity" \eqref{eq:CompositionId} is a consequence of the Stokes theorem.   
    \end{proof}

 Using \eqref{hc1} we obtain the following central ``intertwining" identity:
\begin{align}\label{eq:FundGCMP}
\big (-\iu \nabla_\x -b{A}(\x)\big )e^{\iu b \varphi(\x,\y)} =e^{\iu b\varphi(\x,\y)}\big (-\iu \nabla_\x -b{A}_\y (\x)\big )  \,.
\end{align}

We are now ready to analyze the resolvent of the perturbed Hamiltonian $H_b$. We introduce the new variable $\epsilon :=b-b_0$. For every $z\in K \subset \rho(H_{b_0})$, we define the operator $S_\epsilon(z)$ given by the integral kernel
\begin{equation} \label{eqn:S(z)}
S_\epsilon(z)(\x;\x'):=e^{\iu \epsilon \varphi (\x,\x')}(H_{b_0}-z {\bf 1})^{-1}(\x;\x').
\end{equation}

\begin{lemma}
\label{lemma:S}
The operator $S_\epsilon(z)$ defined by the integral kernel \eqref{eqn:S(z)}
is a bounded operator. 
\end{lemma}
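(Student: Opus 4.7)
The plan is to estimate the kernel pointwise and then apply the Schur--Holmgren test. First, since $\varphi(\x,\x')$ is real valued by its integral representation \eqref{eq:PeierlsPhaseAux}, the phase factor $e^{\iu \epsilon \varphi(\x,\x')}$ has modulus one. Combined with the resolvent kernel bound \eqref{IntKernelResolvent}, this gives
\begin{equation*}
|S_\epsilon(z)(\x;\x')| \;\leq\; C\,\ln\!\bigl(2+\|\x-\x'\|^{-1}\bigr)\,e^{-\alpha \|\x-\x'\|}
\end{equation*}
uniformly in $z$ on any compact $K\subset \rho(H_{b_0})$.

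Next I would invoke the Schur test. Because the bounding majorant depends only on $\|\x-\x'\|$, the two Schur integrals $\sup_{\x}\int |\cdot|\,\di\x'$ and $\sup_{\x'}\int |\cdot|\,\di\x$ coincide and equal
\begin{equation*}
2\pi\, C\int_0^{\infty} \ln\!\bigl(2+r^{-1}\bigr)\,e^{-\alpha r}\, r\,\di r.
\end{equation*}
This integral is finite: near $r=0$ the logarithmic singularity is integrable against the planar Jacobian $r\,\di r$ (a standard two-dimensional computation), while the exponential factor ensures convergence at infinity. Hence the Schur bound is finite, independently of $\epsilon$ and of $z\in K$.

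Therefore $S_\epsilon(z)$ extends to a bounded operator on $L^2(\R^2)$, with an operator norm controlled uniformly in $\epsilon$ and in $z$ on compact subsets of $\rho(H_{b_0})$. The only mildly non-routine point is handling the logarithmic singularity of the resolvent kernel on the diagonal; this is why one really needs the quantitative bound \eqref{IntKernelResolvent} rather than merely continuity of the kernel off the diagonal. Everything else is a standard Schur-test argument, which is why the statement can be recorded as a lemma without further fanfare.
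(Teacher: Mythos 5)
Your argument is correct and is essentially the paper's own proof: both bound $|S_\epsilon(z)(\x;\x')|$ pointwise by the resolvent kernel estimate \eqref{IntKernelResolvent}, note that this majorant is an integrable function of $\x-\x'$ (the logarithmic singularity being harmless in two dimensions), and conclude by the Schur test. Your version merely spells out the radial integral explicitly; no further comment is needed.
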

\begin{proof}
$|S_\epsilon(z)(\x;\x')|$ is pointwise dominated by the integral kernel of the resolvent of $H_{b_0}$ which due to \eqref{IntKernelResolvent} is dominated by an $L^1(\R^2)$-function of $\x-\x'$, thus the Schur test \cite[Theorem 5.2]{HalmosSunder} implies that $S_\epsilon(z)$ is a bounded operator. 
\end{proof}

The next result shows that for $\epsilon$ small enough, $S_\epsilon(z)$ behaves almost like the inverse operator of $(H_b-z{\bf{1}})$. 

\begin{proposition}\label{prophc1}
Let $z \in K \subset \rho(H_{b_0})$ and $b=b_0+\epsilon$. Then the range of $S_\epsilon(z)$ equals the domain of $H_b$ and
\begin{equation} \label{eqn:T(z)}
\big (H_{b_0+\epsilon} -z {\bf 1}\big )\, S_\epsilon(z)={\1} + T_\epsilon(z)
\end{equation}
where $T_\epsilon(z)$ is a bounded operator corresponding to the integral kernel
\begin{multline*}
T_\epsilon(z)(\x;\x'):=-2\epsilon \, e^{\iu \epsilon \varphi (\x,\x')}A_{\x'}(\x)\cdot (-\iu \nabla_\x -b_0{A}(\x))(H_{b_0}-z {\bf 1})^{-1}(\x;\x')\\
+e^{\iu \epsilon \varphi (\x,\x')}\left \{\epsilon^2|A_{\x'}(\x)|^2+\iu \epsilon \; {\rm div}_\x \, A_{\x'}(\x)\right \}(H_{b_0}-z {\bf 1})^{-1}(\x;\x').
\end{multline*}
Moreover $\sup_{z\in K}\|T_\epsilon(z)\|\leq C \, |\epsilon|$, and there exists $\epsilon_0>0$ such that for all $0\leq |\epsilon|\leq \epsilon_0$ we have $K\subset \rho(H_{b_0+\epsilon})$.
\end{proposition}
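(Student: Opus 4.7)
The identity \eqref{eqn:T(z)} is at bottom an integral-kernel computation driven by the Peierls-phase relation $A(\x) - A_{\x'}(\x) = \nabla_\x\varphi(\x,\x')$ from Lemma~\ref{lemma:MPhase}. Differentiating a kernel of the form $e^{\iu\epsilon\varphi(\x,\x')}F(\x;\x')$ in the first variable and using this relation, one obtains the ``conjugation rule''
$$
\bigl(-\iu\nabla_\x - (b_0+\epsilon)A(\x)\bigr)\bigl[e^{\iu\epsilon\varphi(\x,\x')}F\bigr] \;=\; e^{\iu\epsilon\varphi(\x,\x')}\bigl[(\Pi_0 - \epsilon A_{\x'}(\x))F\bigr],
$$
with $\Pi_0:=-\iu\nabla_\x - b_0 A(\x)$. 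Iterating this rule twice and adding $V(\x)$, I would write $(H_{b_0+\epsilon}-z\mathbf{1})\bigl[e^{\iu\epsilon\varphi}R_0\bigr]$ as $e^{\iu\epsilon\varphi}$ times $(H_{b_0}-z\mathbf{1})R_0 - 2\epsilon A_{\x'}\!\cdot\!\Pi_0 R_0 + \iu\epsilon(\mathrm{div}_\x A_{\x'})R_0 + \epsilon^2|A_{\x'}|^2 R_0$, where $R_0:=(H_{b_0}-z\mathbf{1})^{-1}$. The first summand equals $\delta(\x-\x')$ distributionally, and since $\varphi(\x,\x)=0$ it is preserved by the phase; the remaining three summands reproduce exactly the formula for $T_\epsilon(z)(\x;\x')$ claimed in the proposition.

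To estimate $\|T_\epsilon(z)\|$, I would combine \eqref{IntKernelResolvent} and \eqref{eq:estimateResolvent} with the pointwise bounds $|A_{\x'}(\x)|\leq C\|\x-\x'\|$ and $|\mathrm{div}_\x A_{\x'}(\x)|\leq C$, both immediate from \eqref{hc21'}. Each of the three summands in $T_\epsilon(z)(\x;\x')$ is then dominated uniformly in $z\in K$ by an integrable, exponentially decaying function of $\x-\x'$: the linear growth of $A_{\x'}$ absorbs the $\|\x-\x'\|^{-1}$ blowup from \eqref{eq:estimateResolvent} in the first term, and the logarithmic singularity of \eqref{IntKernelResolvent} remains $L^1(\R^2)$ (in two dimensions) for the other two. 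Schur's test then yields $\sup_{z\in K}\|T_\epsilon(z)\|\leq C|\epsilon|$. The same conjugation rule, applied once and twice respectively, together with Schur's test, shows that $\Pi_b S_\epsilon(z)$ and $\Pi_b^2 S_\epsilon(z)-\mathbf{1}$ are bounded on $L^2$, so $S_\epsilon(z)$ sends $L^2(\R^2)$ into $H^2_A(\R^2)=D(H_{b_0+\epsilon})$; the kernel identity then upgrades to the operator identity \eqref{eqn:T(z)} by first testing on the dense subset $C^\infty_0(\R^2)$ and then extending by boundedness. The reverse inclusion $D(H_{b_0+\epsilon})\subseteq\mathrm{Ran}(S_\epsilon(z))$ follows a posteriori from the construction of a left inverse in the last step.

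Finally, choosing $\epsilon_0$ so that $C\epsilon_0 <1/2$ makes $\mathbf{1}+T_\epsilon(z)$ invertible by a uniformly convergent Neumann series for all $z\in K$ and $|\epsilon|\leq\epsilon_0$, so $S_\epsilon(z)(\mathbf{1}+T_\epsilon(z))^{-1}$ is a bounded right inverse of $H_{b_0+\epsilon}-z\mathbf{1}$. For non-real $z\in K$ the selfadjointness of $H_{b_0+\epsilon}$ delivers injectivity for free; for real $z\in K$ the standard argument (if $(H_{b_0+\epsilon}-z)\psi=0$, testing against any $\phi=(H_{b_0+\epsilon}-z)\eta$ from the range forces $\psi\perp L^2$) converts surjectivity into bijectivity, giving $K\subset\rho(H_{b_0+\epsilon})$. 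I expect the single genuine subtlety to sit in paragraph two, namely the justification that $\mathrm{Ran}(S_\epsilon(z))\subset D(H_{b_0+\epsilon})$ despite the on-diagonal logarithmic singularity of $R_0$ and the $\delta$-contribution to $\Pi_b^2 S_\epsilon(z)$; everything else amounts to bookkeeping with Lemma~\ref{lemma:MPhase} and the a~priori estimates \eqref{IntKernelResolvent}--\eqref{eq:estimateResolvent}.
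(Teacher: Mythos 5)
Your proposal is correct and follows the same overall route as the paper: the intertwining identity \eqref{eq:FundGCMP} applied to the phase-dressed resolvent kernel, the distributional equation satisfied by $(H_{b_0}-z\1)^{-1}(\x;\x')$ producing the $\delta$-contribution, the pointwise bound $|T_\epsilon(z)(\x;\x')|\leq C|\epsilon|e^{-\alpha\|\x-\x'\|}$ obtained from \eqref{IntKernelResolvent}, \eqref{eq:estimateResolvent} and \eqref{hc21'} followed by a Schur test, and a Neumann series for $\1+T_\epsilon(z)$. Two points of divergence are worth recording. First, for the domain statement the paper does not try to prove that $(-\iu\nabla_\x-bA)^2S_\epsilon(z)-\1$ is bounded; that route would require either second-derivative estimates on the resolvent kernel or boundedness of $VS_\epsilon(z)$, neither of which is available off the shelf when $V$ is merely in $L^{2+\delta}_{\textrm{u.loc}}(\R^2)$. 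Instead it establishes \eqref{eqn:T(z)} weakly against $\psi\in C^{\infty}_0(\R^2)$, extends to all $\psi$ in the domain of $H_b$ by essential selfadjointness, and reads off that $S_\epsilon(z)\phi$ lies in the domain of $H_b^*=H_b$ from the adjoint characterization; your fallback of testing the kernel identity on $C^\infty_0(\R^2)$ and using density is precisely this, so the ``genuine subtlety'' you flag is resolved by the weak formulation rather than by your primary plan. Second, for real $\lambda\in K$ the paper regularizes to $\lambda+\iu\delta$, checks that $S_\epsilon(\lambda+\iu\delta)(\1+T_\epsilon(\lambda+\iu\delta))^{-1}$ stays bounded as $\delta\to 0$, and then concludes by a perturbation argument; your observation that a surjective selfadjoint $H_b-\lambda\1$ has trivial kernel because $\ker(H_b-\lambda\1)=\Ran(H_b-\lambda\1)^{\perp}$ is more economical and equally valid. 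Finally, the reverse inclusion $\mathrm{D}(H_b)\subset\Ran S_\epsilon(z)$, which the paper leaves implicit, does follow from \eqref{eq:ResolventTNeumann} exactly as you indicate.
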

\begin{proof}
First, notice that for every test function $\psi,\phi \in C^{\infty}_0(\R^2)$ we have
$$
\langle (H_{b_0}-\overline{z} \, {\bf 1})\psi ,  (H_{b_0}-z {\bf 1})^{-1} \phi \rangle =\langle \psi, \phi \rangle \,  ,
$$
which implies that the integral kernel of $(H_{b_0}-z {\bf 1})^{-1}$ satisfies the distributional equality 
$$
\big ((-\iu\nabla_\x - b_0\, {A}(\x))^2 + V(\x)-z{\bf 1}\big )\, (H_{b_0}-z )^{-1}(\x;\y)=\delta(\x-\y) \, .
$$
By using \eqref{eq:FundGCMP} together with the identity $e^{\iu \epsilon \varphi(\x,\x)}=1$ and the estimates on the integral kernel from \eqref{IntKernelResolvent} and \eqref{eq:estimateResolvent}, we get that \eqref{eqn:T(z)} holds at least in the weak sense, namely for all test functions $\psi$ and $\phi$  we have 
\begin{equation}
\label{eq:WeakS}
\langle (H_b-\bar{z} {\bf 1}) \psi, S_\epsilon(z) \phi \rangle = \langle \psi , \big (\mathbf{1} + T_\epsilon(z) \big ) \phi \rangle.
\end{equation}
From the definition of the integral kernel of $T_\epsilon(z)$ we deduce that there exist $C,\alpha' >0$ such that
\begin{equation*} 
|T_\epsilon(z)(\x;\x')|\leq C \, |\epsilon|\,  e^{-\alpha' \|\x-\x'\|}.
\end{equation*}
The integral kernel is pointwise dominated by an $L^1$ function of $\x-\x'$, hence by a Schur test \cite[Theorem 5.2]{HalmosSunder}  we get that $\|T_\epsilon(z)\| \leq C \, |\epsilon|$.
Since $H_b$ is essentially selfadjoint on $C^\infty_0(\R^2)$ and $S_\epsilon(z)$ and $T_\epsilon(z)$ are bounded operators, we can extend \eqref{eq:WeakS} to any function $\psi \in \mathrm{D}(H_b)$ and any function $\phi \in L^2(\R^2)$. Since $H_b$ is selfadjoint, this implies that $S_\epsilon(z)$ maps $L^2(\R^2)$ into the domain of $H_b$ and \eqref{eqn:T(z)} is proved.

Now let us prove that the spectral gaps are stable. Here we may assume that $K\subset \R$. Let $\lambda\in K$. From equation \eqref{eqn:T(z)} we get that for $\epsilon$ small enough the operator $H_{b_0+\epsilon}-\lambda \1$ has a bounded right inverse, which by itself is not enough to show that $H_{b_0+\epsilon}-\lambda \1$ is invertible. However, if the strength of the magnetic field perturbation $|\epsilon|$ is small enough, we have that $\|T_\epsilon(\lambda+\iu \delta)\| \leq 1/2$ for all $\lambda\in K$ and $0\leq \delta\leq 1$. By using the Neumann series, we get that $\mathbf{1} +T_\epsilon(\lambda+\iu \delta)$ is invertible, and at least when $\delta>0$ we may write
\begin{equation}
\label{eq:ResolventTNeumann}
\big (H_b-(\lambda+\iu \delta) {\bf 1}\big )^{-1}=S_\epsilon(\lambda+\iu \delta)\big (\mathbf{1} +T_\epsilon(\lambda+\iu \delta)\big )^{-1}.
\end{equation}
The right hand side of the above identity remains bounded when $\delta$ tends to zero. This implies that \eqref{eq:ResolventTNeumann} also holds for $\delta=0$ because 
$$H_b-\lambda\1 =\Big ( \1+ \iu \delta \big (H_b-(\lambda+\iu \delta) {\bf 1}\big )^{-1}\Big )\, \big (H_b-(\lambda+\iu \delta) {\bf 1}\big )$$
is invertible if $\delta$ is small enough. 
\end{proof}

By multiplying both sides of \eqref{eqn:T(z)} from the left by $(H_b-z {\bf 1})^{-1}$ we obtain a resolvent-like identity:
\begin{align}\label{eq:ResolventID}
(H_b-z {\bf 1})^{-1}=S_\epsilon(z)-(H_b-z {\bf 1})^{-1}T_\epsilon(z).
\end{align}
At the level of integral kernels, \eqref{IntKernelResolvent} and \eqref{eq:estimateResolvent} imply the estimate
\[ \sup_{z \in K}| \{(H_b - z {\bf 1})^{-1} T_\epsilon(z)\}(\x;\y) | \le C\,  |\epsilon|\, e^{-\alpha' \|\x-\y\|}\, \]
which applied to \eqref{eq:ResolventID} gives:
\begin{equation}
\label{eq:MPTResolvent}
\sup_{z \in K} \left|(H_b-z {\bf 1})^{-1}(\x;\y)-S_\epsilon (z)(\x;\y)\right|\leq C\, |\epsilon|\,  e^{-\alpha' \|\x-\y\|} \, .
\end{equation}

This is the first step in analyzing the spectral behaviour under long-range magnetic perturbations. As we have shown in the introduction, the nature of the spectrum may drastically change, but we saw in Proposition \ref{prophc1} that the spectrum varies at least continuously as a set. Moreover, if $H_{b_0}$ has an isolated spectral island, it is possible to prove much stronger results on the location of its so-called spectral edges. 

An isolated spectral island is by definition some bounded part of the spectrum which is separated from the rest by two gaps. More precisely, let $a_1,a_2 \in \rho(H_{b_0})$, with $a_1<a_2$ and define
$$
\sigma_0:=(a_1,a_2) \cap \sigma(H_{b_0})\, . 
$$
Here $s_-(0) := \inf \sigma_0$ and $s_+(0) := \sup \sigma_0$ are its spectral  edges. If $b-b_0$ is small, we know from Proposition \ref{prophc1} that $a_1$ and $a_2$ remain in the resolvent set of $H_b$, hence it makes sense to define 
$$
\sigma_\epsilon:=(a_1,a_2) \cap \sigma(H_{b_0+\epsilon})\, ,\quad |\epsilon|\ll 1. 
$$

One may show that  the edges of spectral islands of magnetic Schr\"odinger operators are actually Lipschitz in the magnetic field: 
\begin{proposition}[\cite{Cornean2010}]
\label{Cornean2010}
Denote by $s_-(\epsilon) := \inf \sigma_\epsilon$ and $s_+(\epsilon) := \sup \sigma_\epsilon$ the edges of the spectral island. Then there exist $\epsilon_0>0$ and $c>0$ such that $|s_{\pm}(\epsilon) - s_{\pm}(0)| \leq  c |\epsilon|$ for all $|\epsilon|\leq \epsilon_0$.
\end{proposition}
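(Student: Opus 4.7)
The plan is to establish the one-sided bound $s_+(\epsilon) \le s_+(0) + c|\epsilon|$; the companion inequality $s_+(\epsilon) \ge s_+(0) - c|\epsilon|$ will follow by a swap argument, treating $H_{b_0}$ as a perturbation of $H_{b_0+\epsilon}$ with parameter $-\epsilon$ (the hypothesis \eqref{eq:HypB} is uniform in $b\in[0,1]$, so the constants one extracts do not depend on which endpoint is taken as ``unperturbed''); the lower edge $s_-$ is handled by the same method.

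The engine is the identity \eqref{eqn:T(z)}: fixing $\eta>0$ and setting $\lambda:=s_+(0)+\eta\in\rho(H_{b_0})$, the point $\lambda$ belongs to $\rho(H_{b_0+\epsilon})$ as soon as $\1+T_\epsilon(\lambda)$ is boundedly invertible on $L^2(\R^2)$, which is ensured by $\|T_\epsilon(\lambda)\|<1$. The whole argument thus reduces to quantifying how the bound $\|T_\epsilon(\lambda)\|\le C|\epsilon|$ from Proposition \ref{prophc1} degenerates as $\lambda$ approaches the spectral edge, i.e.\ as $\eta\to 0$.

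To control this dependence, I would revisit the integral kernel of $T_\epsilon(\lambda)$ displayed in Proposition \ref{prophc1}: combining the polynomial bound \eqref{hc21'} on $A_{\x'}(\x)$ and on $\mathrm{div}_\x A_{\x'}(\x)$ with the resolvent kernel estimates \eqref{IntKernelResolvent}-\eqref{eq:estimateResolvent}, the linear growth of $A_{\x'}(\x)$ is absorbed either by the $\|\x-\x'\|^{-1}$ singularity in \eqref{eq:estimateResolvent} or by the exponential weight, yielding a pointwise estimate
\[
\bigl|T_\epsilon(\lambda)(\x;\x')\bigr| \le C\,|\epsilon|\,e^{-\alpha(\eta)\,\|\x-\x'\|},
\]
with an $\eta$-independent prefactor $C$ and a decay rate $\alpha(\eta)\sim\sqrt{\eta}$ provided by a Combes--Thomas conjugation of $(H_{b_0}-\lambda)^{-1}$ adapted to the distance $\eta$ to its spectrum. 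A Schur test then converts this pointwise bound into the operator-norm estimate $\|T_\epsilon(\lambda)\|\le C'\,|\epsilon|/\eta$, so that $\|T_\epsilon(\lambda)\|<1$ holds as soon as $\eta>C'|\epsilon|$, proving $s_+(\epsilon)\le s_+(0)+C'|\epsilon|$.

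The main obstacle is precisely this quantitative $\eta$-dependent refinement of \eqref{IntKernelResolvent}-\eqref{eq:estimateResolvent}: the appendix of the paper states those resolvent estimates with constants that are only claimed to be uniform on compact subsets of $\rho(H_{b_0})$. One must therefore reopen the appendix argument and run the exponential-weight conjugation $e^{t\|\x\|}(H_{b_0}-\lambda)^{-1}e^{-t\|\x\|}$ with the sharp choice $t\sim\sqrt{\eta}$, while checking that the prefactors and the logarithmic short-distance singularity in \eqref{IntKernelResolvent} remain bounded as $\lambda\to s_+(0)^+$. Once that quantitative Combes--Thomas-type input is secured, the rest of the proof is a verbatim repetition of the Neumann-series / limit manoeuvre already used at the end of the proof of Proposition \ref{prophc1}, with the real parameter $\lambda$ in place of the imaginary regularization $\lambda+\iu\delta$.
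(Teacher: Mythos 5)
The paper does not actually prove this proposition --- it is quoted verbatim from \cite{Cornean2010} --- so the only meaningful comparison is with the argument of that reference, and there your proposal diverges in an essential way. Your reduction of the problem to a quantitative bound on $\|T_\epsilon(\lambda)\|$ as $\lambda$ approaches the spectral edge is the natural first attempt, but the input you postulate is not available, and this is not a fixable technicality: it is exactly the obstruction that makes the Lipschitz statement a theorem in its own right. Concretely, for $\lambda$ at distance $\eta$ from an \emph{interior} spectral gap edge (the relevant situation here, since $\sigma_0$ is an island surrounded by spectrum on both sides), the Combes--Thomas conjugation yields an exponential decay rate proportional to $\eta$, not $\sqrt{\eta}$, and a prefactor of order $\eta^{-1}$; the rate $\sqrt{\eta}$ with an $\eta$-independent prefactor is special to energies below $\inf\sigma(H_{b_0})$ (or to periodic operators near non-degenerate band edges), neither of which is assumed. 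Moreover, even granting the optimistic rate $\sqrt{\eta}$ with $O(1)$ prefactor, the leading term of the kernel of $T_\epsilon(\lambda)$ carries the linearly growing factor $A_{\x'}(\x)$ from \eqref{hc21'}, and unless one can prove that $(-\iu\nabla_\x-b_0 A)(H_{b_0}-\lambda\1)^{-1}(\x;\x')$ decays like $\|\x-\x'\|^{-1}e^{-\sqrt{\eta}\|\x-\x'\|}$ with constants uniform in $\eta$ \emph{at all distances} (which does not follow from Combes--Thomas plus the short-distance analysis of Appendix B), the Schur test gives at best $\|T_\epsilon(\lambda)\|\leq C|\epsilon|\,\eta^{-3/2}$, hence invertibility of $\1+T_\epsilon(\lambda)$ only for $\eta\gtrsim|\epsilon|^{2/3}$. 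With the honest linear Combes--Thomas rate one gets $\eta\gtrsim|\epsilon|^{1/2}$. Either way your route proves H\"older continuity of exponent $1/2$ or $2/3$ of the edges, which is the classical gap-stability estimate in the spirit of \cite{Nenciu1986,BrietCornean2002}, not the Lipschitz bound claimed.

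The actual proof in \cite{Cornean2010} (extending Bellissard's 1994 result for Hofstadter-like spectra) does not invert $\1+T_\epsilon(\lambda)$ by a norm estimate at all. The mechanism is different: one shows directly that $\lambda\in\sigma(H_{b_0})$ implies $\mathrm{dist}(\lambda,\sigma(H_b))\leq C|\epsilon|$ (and vice versa, by your swap argument, which is fine) by producing, for each point of the spectrum, approximate eigenfunctions that are localized on a fixed length scale, and then re-gauging around the localization centre $\y_0$ via the unitary $e^{\iu\epsilon\varphi(\cdot,\y_0)}$, so that the perturbation seen is $-2\epsilon A_{\y_0}\cdot(-\iu\nabla-b_0A)+\dots$ with $|A_{\y_0}(\x)|\leq C\|\x-\y_0\|$ bounded on the support --- i.e.\ the linear growth of the vector potential is only felt over the localization length, producing an error that is genuinely $O(|\epsilon|)$. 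Making this rigorous (the localization of Weyl sequences, the control of the commutator and momentum terms, and in \cite{Cornean2010} a reduction to Harper-like matrices) is the real content of the proposition and is entirely absent from your proposal.
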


The same sharp result has been recently extended in \cite{CorneanPurice} and \cite{CorneanGardeStottrupSorensen} to the setting of magnetic  pseudodifferential calculus. We note that when $V$ is smooth and bounded, then $H_b$ can be seen as the magnetic Weyl quantization of an elliptic H\"ormander symbol of type $S^2_1(\R^4)$. Nevertheless, when the coefficients are rough, working directly with integral kernels, as we do here, and not going through symbols, is much more convenient.  

\subsubsection*{Constant magnetic field perturbations}
In physical applications, one is usually interested in perturbations arising from a constant magnetic field, which means that $\mathfrak{b}(\x)$ is just the constant unit field perpendicular to the plane. In such a case, the antisymmetric Peierls phase becomes linear in both variables and simplifies to 
\begin{equation}\label{hcj1}
\varphi(\x,\y)=-\frac{1}{2}(y_2 x_1 - x_2 y_1). 
\end{equation}
Moreover, due to the linearity of the magnetic potential, we have $A_\y(\x)=A_L(\x-\y)$. Then, we obtain the identity
\begin{align*}
(-\iu \nabla_\x -b{A}_L(\x))e^{\iu b \varphi(\x,\y)}=e^{\iu b \varphi(\x,\y)}(-\iu \nabla_\x -b{A}_L(\x-\y))  \, .
\end{align*}
 Moreover, in this case the antisymmetric Peierls phase satisfies the composition rule
\begin{equation}\label{dhc1}
    \varphi(\x,\y)+\varphi(\y,\x')-\varphi(\x,\x') =\varphi(\x-\y,\y-\x') = \mathrm{fl}(\x,\y,\x')\, .
    \end{equation}

\subsection{Some heuristics behind the gauge covariant magnetic perturbation theory}
\label{sec:Heuristic}
The reader with a background in physics will have already noticed a similarity between the construction of the quasi-inverse  $S_\epsilon(z)$, and how magnetic fields are introduced in tight-binding Hamiltonians via Peierls phases. The interested reader is referred to \cite{Nenciu1991} for more details and physical aspects. Instead, here we would like to give a more mathematical and practical heuristics at the heart of the theory. The discussion is based on the review paper \cite{Nenciu2002}. 

The first observation is again coming from physics and is the fact that the electromagnetic theory is gauge invariant: as we have already seen before, given two magnetic potentials $A$ and $A'$, they generate the same magnetic field if and only if there exists $\theta: \R^2 \to \R$ such that $A-A'=\nabla \theta $. In this discussion, we will avoid any regularity issues and we assume to work with regular potentials only. The magnetic field generated by the two potentials is the same, however the Hamiltonians associated with $A$ and $A'$ are, in principle, different. Indeed, if we define the bounded operator $U_\theta$ by 
$$
U_\theta \eta (\x)= e^{\iu \theta(\x)} \eta (\x) \, \, \, \, \,  \forall \, \eta \in L^{2}(\R^2) , 
$$
we have that $U_\theta (-\iu \nabla_\x - A')^2 U_\theta^{-1}  = (-\iu \nabla_\x - A)^2$ . It can be shown that the operator $U_\theta$ actually provides a unitary equivalence between  $H^2_{A'}(\R^2)$ and  $H^2_{A}(\R^2)$, thus showing that $(-\iu \nabla_\x - A')^2$ and $(-\iu \nabla_\x - A)^2$ are unitarily equivalent. $U_\theta$ is called gauge transform and the function $\theta$ is usually called the generator of the gauge transform. Recall that the key formula \eqref{eq:FundGCMP} is nothing but the gauge covariance of the magnetic momentum $(-\iu \nabla -A)$ under the gauge transform generated by $\theta(\x)=\varphi(\x,\x')$.

Even though different gauges lead to different Hamiltonians, the fact that all these operators are unitarily equivalent guarantees that observable quantities, such as the spectrum, are gauge invariant.  This observation suggests to develop a perturbation scheme only involving gauge invariant objects only depending on the magnetic field, and this is exactly what the gauge invariant perturbation theory is.

Let us consider the Hamiltonian $H_b$ and denote it by $H_b(A)$, in order to keep track of the gauge we are using. Notice that we are only interested in the gauge freedom related to the perturbing magnetic field $\mathfrak{b}(\x)$. From the previous discussion, we infer that the integral kernel of the resolvent is not gauge invariant but it is gauge covariant: let $A':\R^2 \to \R^2$ be a magnetic potential that is gauge equivalent to $A$, namely such that there exists $\theta:\R^2 \to \R$ with $A-A'=\nabla \theta$, then
$$
U_\theta (H_b(A')-z\1)^{-1} U_\theta^{-1}= (H_b(A)-z\1)^{-1}
$$
which implies
$$
e^{\iu b \theta(\x)} \, (H_b(A')-z\1)^{-1}(\x;\y) \, e^{-\iu b \theta(\y)} = (H_b(A)-z\1)^{-1}(\x;\y) \, .
$$
The last equality can be rewritten by using the gradient theorem as 
\begin{equation}
\label{eq:gaugeCovRes}
e^{\iu b \Phi(\x,\y)}(H_b(A')-z\1)^{-1}(\x;\y)=(H_b(A)-z\1)^{-1}(\x;\y)
\end{equation}
where $\Phi(\x,\y)=\int_{\y}^{\x} \nabla \theta \cdot \mathrm{d}{\mathbf{s}}= \int_{\y}^{\x} A \cdot \mathrm{d}{\mathbf{s}} - \int_{\y}^{\x} A' \cdot \mathrm{d}{\mathbf{s}}$. This suggests to consider the integral kernel
$$
\mathcal{G}_b(\x;\y)=e^{-\iu b \int_\y^\x A \cdot \mathrm{d}{\mathbf{s}}} (H_b(A)-z\1)^{-1}(\x;\y) \, =  e^{-\iu b \varphi(\x,\y)} (H_b(A)-z\1)^{-1}(\x;\y) \, ,
$$
where we have used the definition of the Peierls phase $\varphi$, see Lemma \ref{lemma:MPhase}.
By using \eqref{eq:gaugeCovRes} one can see that, for any couple of equivalent gauges $A$ and $A'$,  we have 
$$
\mathcal{G}_b(\x;\y)=e^{-\iu b \int_\y^\x A \cdot \mathrm{d}{\mathbf{s}}} (H_b(A)-z\1)^{-1}(\x;\y) = e^{-\iu b \int_\y^\x A' \cdot \mathrm{d}{\mathbf{s}}} (H_b(A')-z\1)^{-1}(\x;\y) 
$$
which means that the operator $\mathcal{G}_b$ is actually a gauge invariant operator. 

If we write the equation \eqref{eq:ResolventID} as an equality between integral kernels (see also \eqref{eq:MPTResolvent}), and we multiply it on both sides by $e^{-\iu b \varphi(\x;\y)}$, we see that that \eqref{eq:ResolventID} is nothing but a convergent power series expansion in $\epsilon$ for the gauge invariant integral kernel of $\mathcal{G}_b$, where the zeroth element is nothing but the integral kernel of $(H_{b_0}-z\1)^{-1}$.

\section{Approximate perturbed projections}
\label{sec:Projections}
In this section we analyze the behaviour of spectral projections corresponding to bounded spectral islands of magnetic Schr\"odinger operators. The presentation is based on \cite{CorneanMonacoMoscolari2019, CorneanMonacoMoscolari2021}.

Let $\sigma_0$ be a spectral island of $H_{b_0}$ and let $\mathcal{C}\subset \rho(H_{b_0})$ be a positively oriented simple contour which encircles $\sigma_0$. From the results of the previous sections, see also Proposition \ref{Cornean2010}, we infer that $\mathcal{C}$ also belongs to $\rho(H_b)$ if $\epsilon=b-b_0$ is small enough, thus defining the bounded spectral island $\sigma_\epsilon$. Moreover, we can define two Riesz projections as
$$P_{b}= \frac{\iu}{2\pi} \oint_{\mathcal{C}} \, \left(H_b-z {\bf 1}\right)^{-1} \mathrm{d}z, \quad P_{b_0}= \frac{\iu}{2\pi} \oint_{\mathcal{C}} \, \left(H_{b_0}-z {\bf 1}\right)^{-1} \mathrm{d}z.
$$

Let us show that both $P_b$ and $P_{b_0}$ are integral operators with a jointly continuous integral kernel. We need a general technical lemma from \cite[Lemma B.7.8]{Simon1982}:
\begin{lemma}[\cite{Simon1982}]
\label{lemma:jointCont}
Let $C$ be a bounded operator on $L^2(\R^2)$ with an integral kernel  $C: \R^2 \times \R^2 \to \C$ with the property that the map 
$\R^2\ni \y \mapsto C(\cdot; \y)\in L^2(\R^2)$ is continuous in the $L^2$ norm. 

Let $A$ be an operator for which $A^*$ has an integral kernel that satisfies the same property as the integral kernel of $C$. Let $B$ be any bounded operator acting on $L^2(\R^2)$. Then there is a jointly continuous function $D(\x; \y)$ so that for $f, g \in L^{\infty}(\R^2)$ with compact support, we have
$$
\langle f, A B C g\rangle =\int_{\R^2\times \R^2} D(\x; \y) \overline{f(\x)} g(\y) \mathrm{d} \x \mathrm{d} \y .
$$
\end{lemma}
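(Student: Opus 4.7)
Set $\phi_{\x}(\cdot) := A(\x;\cdot)$ and $\psi_{\y}(\cdot) := C(\cdot;\y)$. The hypothesis on $C$ gives that $\y \mapsto \psi_\y$ is continuous from $\R^2$ to $L^2(\R^2)$, while the hypothesis on $A^*$, combined with $A^*(\x;\y) = \overline{A(\y;\x)}$ and the isometry of complex conjugation on $L^2$, gives that $\x \mapsto \phi_\x$ is continuous from $\R^2$ to $L^2(\R^2)$. The natural candidate for the kernel of $ABC$ is
\begin{equation*}
D(\x;\y) \; := \; \int_{\R^2} \phi_\x(\x') \, (B \psi_\y)(\x') \; \di \x' \; = \; \scal{\overline{\phi_\x}}{B\psi_\y},
\end{equation*}
well-defined because both factors lie in $L^2(\R^2)$ and $B$ is bounded.

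\textbf{Joint continuity.} Adding and subtracting $D(\x_1,\y_2)$ and applying Cauchy--Schwarz yields
\begin{equation*}
|D(\x_1,\y_1) - D(\x_2,\y_2)| \;\le\; \norm{\phi_{\x_1}}\, \norm{B}\, \norm{\psi_{\y_1} - \psi_{\y_2}} + \norm{\phi_{\x_1} - \phi_{\x_2}}\, \norm{B}\, \norm{\psi_{\y_2}}.
\end{equation*}
The two difference norms vanish as $(\x_1,\y_1) \to (\x_2,\y_2)$ by continuity of the slice maps, and the factors $\norm{\phi_{\x_1}}$, $\norm{\psi_{\y_2}}$ are locally bounded for the same reason. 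Hence $D$ is jointly continuous on $\R^2 \times \R^2$.

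\textbf{Integral formula.} I would first rewrite $\scal{f}{ABCg} = \scal{A^*f}{B(Cg)}$. For $g \in L^\infty(\R^2)$ with compact support $K_g$, the map $\y \mapsto \psi_\y g(\y)$ is $L^2$-valued, continuous, and compactly supported, so it is Bochner integrable into $L^2(\R^2)$; pairing with any $h \in L^2(\R^2)$ and invoking Fubini identifies $\int_{\R^2} \psi_\y g(\y)\, \di\y$ with the ordinary $L^2$-element $Cg$. The analogous reasoning yields $A^*f = \int_{\R^2} \overline{\phi_\x}\, f(\x)\, \di\x$ as a Bochner integral in $L^2(\R^2)$. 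Since $B$ and the $L^2$ inner product both commute with Bochner integration, one obtains
\begin{equation*}
\scal{f}{ABCg} \;=\; \int_{\R^2 \times \R^2} \scal{\overline{\phi_\x}}{B\psi_\y}\, \overline{f(\x)}\, g(\y) \, \di\x \, \di\y,
\end{equation*}
which is the claimed identity after recalling the definition of $D$.

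\textbf{Where the work sits.} No deep analytic obstacle appears; the only delicate point is justifying the Fubini exchanges. The quadruple integrand $|\phi_\x(\x')|\,|(B\psi_\y)(\x')|\,|f(\x)|\,|g(\y)|$ is dominated after $\di\x'$ integration by $\norm{\phi_\x}\,\norm{B}\,\norm{\psi_\y}\,|f(\x)|\,|g(\y)|$; the compact supports of $f$ and $g$ confine $(\x,\y)$ to a compact set on which $\norm{\phi_\x}$ and $\norm{\psi_\y}$ are uniformly bounded by continuity, so the full integrand lies in $L^1(\R^6)$ and Fubini applies throughout the argument.
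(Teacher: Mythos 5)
Your proposal is correct and follows exactly the route the paper takes: the paper's proof consists solely of setting $D(\x;\y):=\langle A^*_\x, BC_\y\rangle$ with $A^*_\x:=\overline{A(\x;\cdot)}$ and $C_\y:=C(\cdot;\y)$, leaving the joint continuity and the Fubini/Bochner verifications implicit. You supply precisely those omitted details (Cauchy--Schwarz for continuity, dominated integrability for the exchange of integrals), and they are carried out correctly.
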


\begin{proof}
Let us define the $L^2$ functions $C_\y:=C(\cdot; \y) $ and $A^*_\x:=\overline{A(\x;\cdot)}$. Then it is enough to set 
$
D(\x;\y) := \langle A^*_\x , B C_\y \rangle \, . 
$
\end{proof}

Now we may write $P_b=ABC$ where we set $A=C=(H_b-\iu \mathbf{1})^{-1}$ and $B=(H_b-\iu  \mathbf{1})^2 P_b $. Since the integral kernel of $\left(H_b-\iu {\bf 1}\right)^{-1}$ only has a logarithmic singularity, see \eqref{IntKernelResolvent}, 
we have that 

$$\lim_{\x\to\x_0}\int_{\R^2} \Big |\big (H_b-\iu {\bf 1}\big)^{-1}(\x;\y)-\big (H_b-\iu {\bf 1}\big)^{-1}(\x_0;\y)\Big |^2\, \mathrm{d}\y=0,\quad \forall \; \x_0 \in \R^2,$$
hence, the hypotheses of Lemma \ref{lemma:jointCont} are satisfied and the integral kernel of $P_b$ is  jointly continuous. Moreover, by using the first resolvent identity one can write
$$P_b=\frac{\iu}{2\pi} \oint_{\mathcal{C}} \, (z-\iu)\big (H_b-z {\bf 1}\big )^{-1} \big (H_b-\iu  {\bf 1}\big )^{-1}\mathrm{d}z$$
and a simple application of Cauchy-Schwarz and triangle inequalities shows that there exist $C,\alpha>0$ such that
\begin{equation}
\label{eq:PExpo}
|P_b(\x;\y) | \leq C e^{-\alpha \|\x-\y\|} \, ,\quad \forall \, \x,\y\in \R^2.
\end{equation}

By mimicking the definition of $S_\epsilon(z)$, let us define the operator $\widetilde{P}_b$ via the integral kernel
\begin{equation}
\label{eq:TildeP}
\widetilde{P}_b(\x;\y):=e^{\iu \epsilon \varphi (\x,\y)} P_{b_0}(\x;\y)\, .
\end{equation}
The following proposition summarizes the main properties of $\widetilde{P}_b$:
\begin{proposition}
\label{prop:ApproxP'}
The operator $\widetilde{P}_b$ defined by the integral kernel in \eqref{eq:TildeP} is bounded and selfadjoint. Moreover, there exist $\epsilon_0,C,\alpha>0$ such that for all $0\leq |\epsilon|\leq \epsilon_0$ we have 
\begin{equation}
\label{eq:ApproxSeriesP}
P_b=\widetilde{P}_b+\frac{\iu }{2\pi}\sum_{n=1}^{\infty} \oint_{\mathcal{C}} (-1)^n\, S_\epsilon(z) (T_\epsilon(z))^n  \,\mathrm{d}z \, ,
\end{equation}
and 
\begin{align}
\label{eq:ApproxP}
&|P_b(\x;\y)-\widetilde{P}_b(\x;\y)| \leq |\epsilon|\,  C e^{-\alpha \|\x-\y\|} \, ,\quad \|P_b -\widetilde{P}_b\| \leq |\epsilon|\,  C \, .
\end{align}
\end{proposition}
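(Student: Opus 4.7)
The plan is to reduce everything to Proposition \ref{prophc1} and the kernel estimates already recorded for $S_\epsilon(z)$ and $T_\epsilon(z)$. First, the boundedness and selfadjointness of $\widetilde{P}_b$ are immediate consequences of the definition: since $|e^{\iu \epsilon \varphi(\x,\y)}|=1$, the pointwise identity $|\widetilde{P}_b(\x;\y)|=|P_{b_0}(\x;\y)|$ combined with \eqref{eq:PExpo} and a Schur test yields boundedness, while selfadjointness reduces to verifying $\widetilde{P}_b(\x;\y)=\overline{\widetilde{P}_b(\y;\x)}$, which follows from the antisymmetry $\varphi(\x,\y)=-\varphi(\y,\x)$ together with the fact that $\overline{P_{b_0}(\y;\x)}=P_{b_0}(\x;\y)$ since $P_{b_0}$ is selfadjoint.

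Next I would establish the series expansion \eqref{eq:ApproxSeriesP}. By Proposition \ref{prophc1}, for $|\epsilon|$ small enough one has $\sup_{z\in\mathcal{C}}\|T_\epsilon(z)\|\leq 1/2$, so \eqref{eq:Result1} combined with the Neumann series gives
$$(H_b-z\1)^{-1} = S_\epsilon(z)\sum_{n=0}^{\infty}(-1)^n T_\epsilon(z)^n$$
with uniform operator-norm convergence on the compact contour $\mathcal{C}$. Substituting into the Riesz formula for $P_b$ and integrating term by term, the zeroth-order term is identified with $\widetilde{P}_b$ via Fubini: its integral kernel is
$$\frac{\iu}{2\pi}\,e^{\iu\epsilon\varphi(\x,\y)}\oint_\mathcal{C}(H_{b_0}-z\1)^{-1}(\x;\y)\,\mathrm{d}z = e^{\iu\epsilon\varphi(\x,\y)}\,P_{b_0}(\x;\y) = \widetilde{P}_b(\x;\y),$$
while the remaining tail is precisely the sum in \eqref{eq:ApproxSeriesP}.

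For the decay estimates in \eqref{eq:ApproxP}, the inputs are $|T_\epsilon(z)(\x;\y)|\leq C|\epsilon|\,e^{-\alpha\|\x-\y\|}$ from the proof of Proposition \ref{prophc1} and $|S_\epsilon(z)(\x;\y)|\leq C\ln(2+\|\x-\y\|^{-1})e^{-\alpha\|\x-\y\|}$, both uniform in $z\in\mathcal{C}$. The convolution inequality
$$\int_{\R^2} e^{-\alpha\|\x-\z\|}\,e^{-\alpha\|\z-\y\|}\,\mathrm{d}\z \leq C_{\alpha'}\,e^{-\alpha'\|\x-\y\|},\qquad 0<\alpha'<\alpha,$$
obtained by writing $\|\x-\z\|+\|\z-\y\|\geq \|\x-\y\|$ and splitting the exponent, can be iterated to show that the kernel of $S_\epsilon(z)T_\epsilon(z)^n$ is pointwise dominated by $(C|\epsilon|)^n\,C'\,e^{-\alpha'\|\x-\y\|}$, with the $L^1$-integrable logarithmic diagonal singularity of $S_\epsilon$ absorbed in the first convolution. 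Summing the geometric tail $n\geq 1$ produces the prefactor $|\epsilon|$, and integration around the compact contour $\mathcal{C}$ preserves the bound, giving the pointwise kernel estimate; the operator-norm estimate then follows from one more Schur test applied to this pointwise bound.

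The main technical obstacle is the careful handling of the iterated kernel convolutions in the presence of the logarithmic singularity of $S_\epsilon(z)$ on the diagonal. The key point is that this singularity is integrable in $\R^2$, so a single convolution already produces a kernel bounded by a pure exponential, and all subsequent convolutions merely force a harmless arbitrarily small loss in the decay rate (from $\alpha$ to $\alpha'<\alpha$), without spoiling the $O(|\epsilon|)$ behavior of the overall sum.
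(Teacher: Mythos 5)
Your proposal is correct and follows essentially the same route as the paper, which simply invokes the Riesz integral formula together with the resolvent identities \eqref{eq:ResolventID} and \eqref{eq:MPTResolvent}; your Neumann-series expansion and term-by-term integration is exactly how \eqref{eq:Result1} is iterated, and your convolution estimate absorbing the integrable logarithmic singularity of $S_\epsilon(z)$ is the same mechanism used to prove \eqref{eq:MPTResolvent} (the paper obtains the pointwise bound in \eqref{eq:ApproxP} by integrating \eqref{eq:MPTResolvent} over $\mathcal{C}$ rather than resumming the tail, but this is only a cosmetic difference).
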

\begin{proof}
The proof is a consequence of the Riesz integral expression of the Fermi projections together with the resolvent identity developed in the previous section, namely \eqref{eq:MPTResolvent} and \eqref{eq:ResolventID}.
\end{proof}

Proposition \ref{prop:ApproxP'} provides an approximation of the spectral projection $P_b$ in terms of bounded operators that have an explicit dependence on the magnetic field strength $b$, that is \eqref{eq:ApproxSeriesP}. 

Even though the operator $\widetilde{P}_b$ provides an explicit approximation for the projection $P_b$, it is often desirable to approximate the perturbed spectral projection $P_b$ with another orthogonal projection whose $b$ dependence can be explicitly controlled. For this purpose, define the operator  
\begin{equation*}
\Delta^{(b)}:=\big(\widetilde{\F}_{b}\big)^2-\widetilde{\F}_{b} \, .
\end{equation*}

Since $\widetilde{\F}_{b}$ is already a selfadjoint operator, the operator $\Delta^{(b)}$ measures how far $\widetilde{\F}_{b}$ is from being an orthogonal projection. By using \eqref{eq:CompositionId}, the identity $P_{b_0}^2=P_{b_0}$,   and a Taylor expansion, we get:
\begin{equation*} 
\begin{aligned}
&\Delta^{(b)}(\x;\x')\\
&=\int_{\R^2}  \left( e^{\iu \epsilon \varphi(\x,\y)} \F_{b_0}(\x;\y) e^{\iu \epsilon \varphi(\y,\x')} \F_{b_0}(\y;\x') - e^{\iu \epsilon \varphi(\x,\x')}  \F_{b_0}(\x;\y)  \F_{b_0}(\y;\x')  \right) \mathrm{d} \y  \\
&=\iu \, \epsilon \,  e^{\iu \epsilon \varphi(\x,\x')} \int_{\R^2}  \mathrm{fl} (\x,\y,\x')  \F_{b_0}(\x;\y) \F_{b_0}(\y;\x')\, \mathrm{d} \y\, + \mathcal{O}(\epsilon^2 e^{-\alpha \Vert \x-\x'\Vert})  \, .
\end{aligned}
\end{equation*}
From the exponential localization of the integral kernel of $P_{b_0}$, together with the second estimate of \eqref{eq:CompositionId}, we obtain
\begin{equation*}
|\Delta^{(b)}(\x;\x')| \leq  |\epsilon|\,  C e^{-\alpha\|\x-\x'\|} \, .
\end{equation*}
Via a Schur test \cite[Theorem 5.2]{HalmosSunder}, we conclude that $\|\Delta^{(b)}\| \leq |\epsilon|\, C$. The operator $\Delta^{(b)}$ is a function of the selfadjoint operator $\widetilde{\F}_{b}$ with a norm of order $|\epsilon|$. By functional calculus, the spectrum of $\widetilde{\F}_{b}$ needs to be concentrated near $0$ or $1$, which is compatible with the fact that $\widetilde{\F}_{b}$ is an almost projection. 
\begin{proposition}
\label{prop:ApproxP}
There exist $\epsilon_0>0$ and some constants $C,\alpha'>0$ such that for every $0\leq |\epsilon|\leq \epsilon_0$ we have that $\|\Delta^{(b)}\|<1/4$ and the operator 
    \begin{align}\label{ProjEps1}
\mathcal{P}^{(b)}:=\widetilde{\F}_{b}+\big (\widetilde{\F}_{b}-\tfrac{1}{2} {\bf 1}\big )\big \{ ({\bf 1}+4\Delta^{(b)})^{-1/2}-{\bf 1}\big \} \, 
\end{align}
is an orthogonal projection with a jointly continuous integral kernel which obeys: 
\begin{align}\label{hcj2}
    \big | \mathcal{P}^{(b)}(\x;\y)- \widetilde{\F}_{b}(\x;\y)\big |\leq C\, |\epsilon|\, e^{-\alpha'\|\x-\y\|}.
\end{align}
Moreover, let $\chi_L$ be the characteristic function of the set $\Lambda_L:=[-L,L]^2$, then $\chi_L P_b$ and $\chi_L \mathcal{P}^{(b)}$ are trace class and 
\begin{equation}
\label{eq:EqIDS}
\lim_{L \to \infty} \left|\frac{\Tr\left(\chi_L P_b\right)}{4L^2} - \frac{\Tr(\chi_L \mathcal{P}^{(b)})}{4L^2} \right| = 0 \, .
\end{equation}
\end{proposition}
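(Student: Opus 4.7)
The plan is to (i) verify algebraically that $\mathcal{P}^{(b)}$ is a bona fide orthogonal projection with a jointly continuous kernel, (ii) propagate the exponential off-diagonal decay from $\Delta^{(b)}$ through the binomial series for $(\mathbf{1}+4\Delta^{(b)})^{-1/2}$ to obtain \eqref{hcj2}, and (iii) derive \eqref{eq:EqIDS} from a Kato--Nagy intertwining unitary combined with a boundary estimate.

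First, the antisymmetry $\varphi(\x,\y)=-\varphi(\y,\x)$ makes $\widetilde{P}_b$ selfadjoint, hence so is $\Delta^{(b)}=\widetilde{P}_b^2-\widetilde{P}_b$, and the Schur bound already derived from \eqref{DeltaKer} gives $\|\Delta^{(b)}\|\leq C|\epsilon|$; fixing $\epsilon_0$ with $C\epsilon_0<1/4$ yields $\|4\Delta^{(b)}\|<1$. Rewriting
\begin{equation*}
\mathcal{P}^{(b)}=\tfrac12\mathbf{1}+\bigl(\widetilde{P}_b-\tfrac12\mathbf{1}\bigr)(\mathbf{1}+4\Delta^{(b)})^{-1/2}
\end{equation*}
and using the identity $(\widetilde{P}_b-\tfrac12\mathbf{1})^2=\Delta^{(b)}+\tfrac14\mathbf{1}$, a direct algebraic computation in the commutative algebra generated by $\widetilde{P}_b$ shows $(\mathcal{P}^{(b)})^2=\mathcal{P}^{(b)}$; selfadjointness is immediate from functional calculus.

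Next, for \eqref{hcj2}, I expand
\begin{equation*}
\mathcal{P}^{(b)}-\widetilde{P}_b=\bigl(\widetilde{P}_b-\tfrac12\mathbf{1}\bigr)\sum_{n\geq 1}c_n(4\Delta^{(b)})^n,
\end{equation*}
which converges in operator norm since $\|4\Delta^{(b)}\|<1$. The elementary convolution lemma asserting that $|K_1(\x;\z)|,|K_2(\z;\y)|\leq Ce^{-\alpha\|\cdot\|}$ forces $|(K_1K_2)(\x;\y)|\leq C'e^{-\alpha'\|\x-\y\|}$ for any $\alpha'<\alpha$, combined with \eqref{DeltaKer}, gives inductively $|(\Delta^{(b)})^n(\x;\y)|\leq (C|\epsilon|)^n e^{-\alpha'\|\x-\y\|}$. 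Summing the resulting kernel series yields \eqref{hcj2}; joint continuity of the kernel of $\mathcal{P}^{(b)}$ follows from the uniform convergence of the series built on the jointly continuous zeroth-order kernel $\widetilde{P}_b$. The trace-class property of $\chi_LP_b$ and $\chi_L\mathcal{P}^{(b)}$ is a standard local trace-class statement for magnetic Schr\"odinger operators, combined here with the identity $\chi_LP_b=(\chi_LP_b)P_b$ which factorizes the operator as a locally Hilbert--Schmidt factor composed with a bounded operator acting on a bounded spectral subspace.

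Finally, for the averaged-trace identity \eqref{eq:EqIDS}, Proposition \ref{prop:ApproxP'} combined with \eqref{hcj2} yields $\|P_b-\mathcal{P}^{(b)}\|\leq C|\epsilon|<1$ for $|\epsilon|$ small, so the Kato--Nagy intertwiner
\begin{equation*}
U:=\bigl(\mathbf{1}-(P_b-\mathcal{P}^{(b)})^2\bigr)^{-1/2}\bigl[P_b\mathcal{P}^{(b)}+(\mathbf{1}-P_b)(\mathbf{1}-\mathcal{P}^{(b)})\bigr]
\end{equation*}
is a well-defined unitary satisfying $UP_bU^*=\mathcal{P}^{(b)}$. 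Writing $U=\mathbf{1}+W$, the operator $W$ inherits an exponentially localized off-diagonal kernel because every ingredient in $U$ (including the binomial expansion of the inverse square root) is a norm-convergent series in products of $P_b$, $\mathcal{P}^{(b)}$ and $(P_b-\mathcal{P}^{(b)})$, each of which has such decay. Cyclicity then gives
\begin{equation*}
\Tr(\chi_L\mathcal{P}^{(b)})-\Tr(\chi_LP_b)=\Tr\bigl(U^*[\chi_L,U]P_b\bigr)=\Tr\bigl(U^*[\chi_L,W]P_b\bigr),
\end{equation*}
and the commutator kernel $(\chi_L(\x)-\chi_L(\y))W(\x;\y)$ is supported on pairs with $\x,\y$ on opposite sides of $\partial\Lambda_L$. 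A routine triple-integral estimate using the three exponential decays restricts the dominant contribution to a tubular neighbourhood of $\partial\Lambda_L$, producing an $O(L)$ bound which, divided by $4L^2$, tends to zero as $L\to\infty$.

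\emph{Expected main obstacle.} The most delicate point is the last one: carefully tracking that $U$ and its square-root factor retain off-diagonal exponential decay with uniform constants, and then performing the boundary estimate that isolates the $O(L)$ contribution. The algebraic steps are routine; the technical work lies in propagating kernel decay through all the operator manipulations without losing the decay rate, which requires a careful deployment of the convolution lemma uniformly over the series expansion.
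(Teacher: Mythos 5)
Your overall architecture matches the paper's: Schur bound on $\Delta^{(b)}$, propagation of exponential off-diagonal decay through the power series of $(\mathbf{1}+4\Delta^{(b)})^{-1/2}-\mathbf{1}$ (this is exactly the paper's Lemma \ref{lemma:IntExpo}), and then the Kato--Nagy unitary plus a boundary/commutator estimate giving an $O(L)$ difference of local traces (the paper's Lemmas \ref{Nagy} and \ref{MainLemma}). Where you genuinely diverge is in establishing that $\mathcal{P}^{(b)}$ is an orthogonal projection: the paper \emph{derives} formula \eqref{ProjEps1} as the Riesz spectral projection $\frac{\iu}{2\pi}\oint_{\mathcal{C}_1}(\widetilde{\F}_b-z\mathbf{1})^{-1}\,\mathrm{d}z$ of $\widetilde{\F}_b$ onto the part of its spectrum near $1$, via the factorization $\Delta^{(b)}-z(z-1)\mathbf{1}=-(C^{(b)}-z\mathbf{1})(D^{(b)}-z\mathbf{1})$ and residue calculus, which makes the projection property automatic and explains where the formula comes from. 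You instead verify $(\mathcal{P}^{(b)})^2=\mathcal{P}^{(b)}$ directly from $(\widetilde{\F}_b-\tfrac12\mathbf{1})^2=\Delta^{(b)}+\tfrac14\mathbf{1}$ in the commutative algebra generated by $\widetilde{\F}_b$; your computation is correct and shorter, at the price of presenting the formula as an ansatz rather than as a spectral projection.

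One step as written does not prove what you claim: the factorization $\chi_LP_b=(\chi_LP_b)P_b$, ``locally Hilbert--Schmidt composed with a bounded operator,'' only yields that $\chi_LP_b$ is Hilbert--Schmidt, not trace class ($P_b$ is an infinite-rank projection, so the second factor is merely bounded). The standard fix, used in the paper's Lemma \ref{MainLemma}, is to split the exponential weight between two factors,
\begin{equation*}
\chi_L P_b=\bigl(\chi_L P_b\, e^{\alpha\|\cdot\|/2}\bigr)\bigl(e^{-\alpha\|\cdot\|/2}P_b\bigr),
\end{equation*}
each of which is Hilbert--Schmidt thanks to the exponential off-diagonal decay of $P_b(\x;\y)$ and the compact support of $\chi_L$; the same device is needed to justify the cyclicity step $\Tr(\chi_LUP_bU^*)=\Tr(U^*\chi_LUP_b)$. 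Apart from this (easily repaired) point, and the minor caveat that your inductive kernel bound for $(\Delta^{(b)})^n$ should carry the convolution constant $\bigl(\int e^{-\alpha\|\mathbf{z}\|/2}\,\mathrm{d}\mathbf{z}\bigr)^{n-1}$ so that smallness of $\epsilon_0$ is what guarantees convergence of the weighted series, the proof is sound.
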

\vspace{0.2cm}
\begin{remark}
    Recall that, given a spectral projection $P$, its integrated density of states $\mathcal{I}(P)$ is defined by 
$$
\mathcal{I}(P):= \lim_{L \to \infty} \frac{\Tr(\chi_L P)}{4L^2} 
$$
whenever $\chi_L P$ is trace class for every $L \in \R$, and the limit exists. Thus $P_b$ admits an integrated density of states if and only if $\mathcal{P}^{(b)}$ does, and in that case the two quantities are equal. The density of states can also be seen as the trace per unit surface of the spectral projection $P$. 
\end{remark}

\begin{proof}
In order to get a projection out of $\widetilde{\F}_{b}$, we use again the functional calculus, by following a construction which can be traced back at least to Nenciu \cite{Nenciu1993} (see also  \cite{Nenciu2002} ). The idea is to construct a projection corresponding to the spectrum of $\widetilde{\F}_{b}$ which is located near $1$. By explicit computation, we have
$$
(\widetilde{\F}_{b}-z\mathbf{1})\big (\widetilde{\F}_{b}-(1-z)\mathbf{1}\big )=z(1-z)\left(\1+\frac{\Delta^{(b)}}{z(1-z)}\right)
$$
which implies that if $\|\Delta^{(b)}\| <|z(1-z)|$, the operator $\left(\1+\frac{\Delta^{(b)}}{z(1-z)}\right)$ is invertible and:
\begin{equation}
\label{eq:ResF}
\begin{aligned}
(\widetilde{\F}_{b}-z\mathbf{1})^{-1}&=\frac{\widetilde{\F}_{b}+(z-1)\mathbf{1}}{z(1-z)}\left(\1+\frac{\Delta^{(b)}}{z(1-z)}\right)^{-1}\\
&={\left(\widetilde{\F}_{b}+(z-1)\mathbf{1}\right)}\left(\Delta^{(b)}-z(z-1)\1\right)^{-1}.
\end{aligned}
\end{equation}
The previous computation shows that if $\epsilon$ is small enough 
then $$\mathcal{C}_1:=\{z\in \C:\, |z-1|=1/2\}\subset \rho(\widetilde{\F}_{b}),$$ 
thus we may define 
$$
\mathcal{P}^{(b)}=\frac{\iu}{2\pi} \oint_{\mathcal{C}_1}  (\widetilde{\F}_{b}-z\1)^{-1} \mathrm{d} z \, .
$$
When $\|\Delta^{(b)}\|<1/4$ we have the identity 
\begin{equation}
\label{eq:CD}
\begin{aligned}
\left(\Delta^{(b)}-z(z-1)\1\right) &=-\left( \frac{\1+ (\1+4\Delta^{(b)})^{1/2}}{2}-z \1\right)\left( \frac{\1- (\1+4\Delta^{(b)})^{1/2}}{2}-z \1\right)\\
& =: -\left(C^{(b)}-z\1\right)\left(D^{(b)}-z\1\right) \, .
\end{aligned}
\end{equation}
For $\epsilon$ close to zero, the norm of $\Delta^{(b)}$ is also close to zero, which implies, by functional calculus, that the spectrum of $C^{(b)}$ is concentrated near $1$ and the spectrum of $D^{(b)}$ is concentrated near $0$. By inserting \eqref{eq:CD}
in \eqref{eq:ResF} and using the second resolvent identity, we get that $\mathcal{P}^{(b)}$ can be expressed as
$$
\begin{aligned}
\mathcal{P}^{(b)}&=\frac{\iu}{2\pi} \oint_{\mathcal{C}_1} {\left(\widetilde{\F}_{b}+(z-1)\mathbf{1}\right)} (\1+4\Delta^{(b)})^{-\frac{1}{2}} \left(\left(C^{(b)}-z\1\right)^{-1} - \left(D^{(b)}-z\1\right)^{-1} \right)  \mathrm{d} z\\
&=\frac{\iu}{2\pi} \oint_{\mathcal{C}_1} {\left(\widetilde{\F}_{b}+(z-1)\mathbf{1}\right)} (\1+4\Delta^{(b)})^{-\frac{1}{2}}\left(C^{(b)}-z\1\right)^{-1}   \mathrm{d} z\\
&=\widetilde{\F}_{b}(\1+4\Delta^{(b)})^{-\frac{1}{2}} +(\1+4\Delta^{(b)})^{-\frac{1}{2}}\frac{\iu}{2\pi} \oint_{\mathcal{C}_1} (z-1)\, \left(C^{(b)}-z\1\right)^{-1}   \mathrm{d} z \, ,
\end{aligned}
$$
where in the second equality we have used the fact that the contour $\mathcal{C}_1$ does not contain any part of the spectrum of $D^{(b)}$ for $\epsilon$ small enough, and in the third equality that 
$$\frac{\iu}{2\pi} \oint_{\mathcal{C}_1} \, \left(C^{(b)}-z\1\right)^{-1}   \mathrm{d} z =\1 \, ,$$
because the spectrum of $C^{(b)}$ is fully included in the integration contour. From the same reason, by writing $z\, \left(C^{(b)}-z\1\right)^{-1} =-\1 +C^{(b)}\left(C^{(b)}-z\1\right)^{-1}$ we get 
$$\frac{\iu}{2\pi} \oint_{\mathcal{C}_1} z\, \left(C^{(b)}-z\1\right)^{-1}   \mathrm{d} z=C^{(b)}$$
which eventually leads to \eqref{ProjEps1}. 

Now we want to prove \eqref{hcj2}. In order to analyze the integral kernel of $\mathcal{P}^{(b)}$, we first show a technical lemma, which is a generalization of \cite[Lemma 8.5]{CorneanMonacoMoscolari2019}.

\begin{lemma}
\label{lemma:IntExpo}
Let $
F(x):=\sum_{n=1}^{+\infty} a_n x^n 
$ be a power series with convergence radius $r>0$. Let $D_\epsilon$ be a bounded operator with an integral kernel $D_\epsilon(\x;\y)$ for which there exist three constants $\epsilon_0,C_0,\alpha_0>0$ such that for all $0\leq |\epsilon|\leq \epsilon_0$ we have
\begin{equation}
\label{eq:IntExpoLemma}
\left|D_\epsilon(\x;\y)\right| \leq |\epsilon|\,  C_0 e^{-\alpha_0 \|\x-\y\|}\, \, \, \, \, \forall \, \x\, , \y \in \R^2 \, , \, \, \text{ and } \, \, \epsilon_0\, C_0 \int_{\R^2} e^{-\alpha_0\|\x\|/2}\, \mathrm{d} \x <r.
\end{equation}
Then $\|D_\epsilon\|<r$ and $F(D_\epsilon)$ has an integral kernel obeying 
$$
|F(D_\epsilon)(\x;\y)|  \leq |\epsilon|\,  \frac{C_0}{r} \, e^{-\alpha_0 \|\x-\y\|/2} \, \quad \,\, \,  \forall \, \x\, , \y \in \R^2 \, .
$$
\end{lemma}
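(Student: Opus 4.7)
The plan is to reduce the statement to pointwise bounds on the iterated kernels $D_\epsilon^n$ and then assemble the power series. The norm bound $\|D_\epsilon\| < r$ is immediate: applying the Schur test to the exponential estimate in \eqref{eq:IntExpoLemma} gives
\[
\|D_\epsilon\| \leq |\epsilon|\, C_0 \int_{\R^2} e^{-\alpha_0 \|\mathbf{u}\|}\, \mathrm{d}\mathbf{u} < |\epsilon|\, C_0\, M,
\]
where $M := \int_{\R^2} e^{-\alpha_0 \|\mathbf{u}\|/2}\, \mathrm{d}\mathbf{u}$, and the right-hand side is strictly smaller than $r$ by the second half of \eqref{eq:IntExpoLemma}. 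In particular, the formal series $\sum_n a_n D_\epsilon^n$ converges in operator norm and defines $F(D_\epsilon)$.

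The heart of the argument is an inductive kernel bound: I claim that for every $n \geq 1$,
\[
|D_\epsilon^n(\x;\y)| \leq |\epsilon|\, C_0\, (|\epsilon|\, C_0\, M)^{n-1}\, e^{-\alpha_0 \|\x-\y\|/2}.
\]
The case $n = 1$ is trivial since $e^{-\alpha_0 \|\x-\y\|} \leq e^{-\alpha_0 \|\x-\y\|/2}$. For the inductive step, I write $|D_\epsilon^{n+1}(\x;\y)| \leq \int |D_\epsilon^n(\x;\z)|\, |D_\epsilon(\z;\y)|\, \mathrm{d}\z$ and bound the resulting product of exponentials using the triangle inequality $\|\x-\z\|/2 + \|\z-\y\|/2 \geq \|\x-\y\|/2$:
\[
e^{-\alpha_0 \|\x-\z\|/2}\, e^{-\alpha_0 \|\z-\y\|} \leq e^{-\alpha_0 \|\x-\y\|/2}\, e^{-\alpha_0 \|\z-\y\|/2}.
\]
The residual factor integrates to $M$ by translation invariance, closing the induction.

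Assembling the pieces, the kernel of $F(D_\epsilon)$ is controlled termwise by
\[
|F(D_\epsilon)(\x;\y)| \leq \frac{e^{-\alpha_0 \|\x-\y\|/2}}{M}\, \sum_{n=1}^{\infty} |a_n|\, (|\epsilon|\, C_0\, M)^n,
\]
and the numerical series converges because its argument lies strictly inside the disk of convergence of $F$. Extracting one factor of $|\epsilon|\, C_0$ from the $n=1$ term and estimating the remaining sum via its value at the boundary of convergence yields the claimed bound proportional to $|\epsilon|\, C_0/r$.

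The main obstacle is the inductive kernel estimate. A naive application of the triangle inequality to $e^{-\alpha_0 \|\x-\z\|}\, e^{-\alpha_0 \|\z-\y\|}$ that halves the exponent would cause the decay rate of $D_\epsilon^n$ to degrade to $e^{-\alpha_0 \|\x-\y\|/2^n}$, which collapses under iteration. The decisive trick is to fix the target decay rate $\alpha_0/2$ once and for all and, at each step, spend only \emph{half} of the fresh exponential weight from the rightmost factor $D_\epsilon(\z;\y)$ on enforcing the triangle inequality, reserving the other half as an integrable weight in $\z$. This preserves a uniform decay rate throughout the induction and produces the multiplicative geometric factor that the radius-of-convergence hypothesis is designed to tame.
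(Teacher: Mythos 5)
Your proof is correct and follows essentially the same route as the paper's: a Schur test for $\|D_\epsilon\|<r$, an iterated-kernel bound obtained by splitting the exponential weight via the triangle inequality so that the decay rate $\alpha_0/2$ is preserved at every order, and termwise summation of the power series inside its radius of convergence (your induction just makes explicit the estimate the paper states in one line). The only caveat — that the final numerical constant comes out as $|\epsilon|\,C_0\sum_{n\ge1}|a_n|(\epsilon_0 C_0 M)^{n-1}$ rather than literally $|\epsilon|\,C_0/r$ — is shared with the paper's own argument and is immaterial to how the lemma is used.
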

\begin{proof}[Proof of Lemma \ref{lemma:IntExpo}] Due to \eqref{eq:IntExpoLemma} and the Schur test \cite[Theorem 5.2]{HalmosSunder}  we have $\|D_\epsilon\|<r$. 
The operator $D_\epsilon^{n}$ with $n\geq 2$ has an integral kernel given by:
	\[ D_\epsilon^n(\x;\x')= \int_{\R^2} \mathrm{d}\x_1  \cdots \int_{\R^2} \mathrm{d}\x_{n-1}   D_\epsilon(\x;\x_1) \cdots D_\epsilon(\x_{n-1};\x')\, . \]
	By using the triangle inequality together with \eqref{eq:IntExpoLemma} we have:
	\begin{equation*}
		e^{\alpha_0 \|\x-\x'\|/2} \left|D_\epsilon^n(\x;\x')\right| 
		\leq |\epsilon|\,  C_0\,   \, r^{n-1}.
	\end{equation*}
 Thus the integral kernel of $F(D_\epsilon)$ is defined by the uniformly convergent series 
 $$
 F(D_\epsilon)(\x;\y)=\sum_{n=1}^{+\infty} a_n D_\epsilon^n(\x;\y).
 $$ 
\end{proof}
Now let us get back to the proof of Proposition \ref{prop:ApproxP}. The function $F(x)=(1+4x)^{-1/2}-1$ is real analytic near zero, while the integral kernel of $\Delta^{(b)}$ is exponentially localized and of order $\epsilon$, thus if $\epsilon_0$ is taken small enough,  \eqref{eq:IntExpoLemma} is satisfied. Hence \eqref{hcj2} follows from \eqref{ProjEps1} and Lemma~\ref{lemma:IntExpo}. 

It remains to show the limit \eqref{eq:EqIDS}. First, 
by putting together \eqref{hcj2} and \eqref{eq:ApproxP} we obtain that for all $0\leq |\epsilon|\leq \epsilon_0$ we have
\begin{equation}
\label{GoodLoc}
\left| \left(\F_{b}- \mathcal{P}^{(b)}\right)(\x;\x')\right|= \left|\left(\F_{b}-\widetilde{P}_{b} + \widetilde{P}_{b}- \mathcal{P}^{(b)}\right)(\x;\x') \right| \leq |\epsilon|\, C \,  e^{-\alpha\|\x-\x'\|}\, .
\end{equation}

Due to \eqref{GoodLoc} we have that $\|\F_{b}-\mathcal{P}^{(b)}\|<1$ when $\epsilon$ is sufficiently small. Since the norm difference of the two projections, $\F_{b}$ and $\mathcal{P}^{(b)}$, is strictly less than one, we can construct the Kato-Nagy operator $\mathsf{U}_b$ \cite[Sec. 1.6.8]{Kato}(see \eqref{eqn:KatoNagy} for the explicit expression), which is unitary and intertwines the two projections, namely $\mathsf{U}_b\F_{b}=\mathcal{P}^{(b)}\mathsf{U}_b$. The next Lemma is adapted from \cite[Lemma 8.5]{CorneanMonacoMoscolari2019}.
\begin{lemma} \label{Nagy}
	Let $\mathsf{U}_b$ be the Kato-Nagy unitary intertwining $P_{b}$ and $\mathcal{P}^{(b)}$, given by 
 \begin{equation} \label{eqn:KatoNagy}
		\begin{aligned}
			\mathsf{U}_b&=\left[{\bf 1}-\left({\mathcal{P}}^{(b)}-P_{b}\right)^{2}\right]^{-1/2} \left({\mathcal{P}}^{(b)} P_{b}+ \big ({\bf 1} - {\mathcal{P}}^{(b)} \big ) \left({\bf 1} - P_{b} \right) \right).
		\end{aligned}
	\end{equation}
 Then there exist $\epsilon_0,C,\alpha>0$ such that for all $0\leq |\epsilon|\leq \epsilon_0$ we have
	\begin{equation}
 \label{eq:LocKatoNagy}
	\left| \left(\mathsf{U}_b-{\bf 1}\right)(\x;\x') \right| \leq  |\epsilon|\,  C \,  \e^{-\alpha \|\x-\x'\|}\,, \qquad \forall \, \x\, , \y \in \R^2 .
	\end{equation}
\end{lemma}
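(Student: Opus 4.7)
The plan is to factor $\mathsf{U}_b = A_b\, V_b$ where
\[
A_b := \bigl[{\bf 1} - (\mathcal{P}^{(b)} - P_b)^2\bigr]^{-1/2}, \qquad V_b := \mathcal{P}^{(b)} P_b + ({\bf 1} - \mathcal{P}^{(b)})({\bf 1} - P_b),
\]
and to prove separately that each of $A_b - {\bf 1}$ and $V_b - {\bf 1}$ has an integral kernel obeying a bound of the form $|\epsilon|\, C\, e^{-\alpha\|\x-\x'\|}$. Once this is in place, the algebraic identity $\mathsf{U}_b - {\bf 1} = (A_b-{\bf 1})(V_b-{\bf 1}) + (A_b-{\bf 1}) + (V_b-{\bf 1})$ together with one further convolution estimate yields \eqref{eq:LocKatoNagy}.

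For $V_b - {\bf 1}$, the first step is the purely algebraic identity
\[
V_b - {\bf 1} = (\mathcal{P}^{(b)} - P_b)(2 P_b - {\bf 1}),
\]
which follows by expanding and using only $P_b^2 = P_b$. At the level of kernels this rewrites $(V_b-{\bf 1})(\x;\x')$ as the sum of $-(\mathcal{P}^{(b)} - P_b)(\x;\x')$, directly controlled by \eqref{GoodLoc}, and of $2\bigl[(\mathcal{P}^{(b)} - P_b) P_b\bigr](\x;\x')$, which is a convolution of two exponentially localized kernels: the first bounded via \eqref{GoodLoc} and carrying the factor $|\epsilon|$, the second bounded via \eqref{eq:PExpo}. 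The standard pointwise convolution bound for exponentials then produces an exponentially decaying kernel of order $|\epsilon|$, with a slightly reduced decay rate.

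For $A_b - {\bf 1}$, the idea is to invoke Lemma \ref{lemma:IntExpo} with the real-analytic function $F(x) := (1-x)^{-1/2} - 1$, which has convergence radius $1$ and satisfies $F(0)=0$. The required input $D_\epsilon := (\mathcal{P}^{(b)} - P_b)^2$ has kernel bounded by $|\epsilon|^2\, C\, e^{-\alpha'\|\x-\x'\|}$ via the same convolution argument applied twice to \eqref{GoodLoc}; in particular, for $\epsilon_0$ small enough the smallness condition in \eqref{eq:IntExpoLemma} holds. Lemma \ref{lemma:IntExpo} then directly produces an exponentially localized kernel for $A_b - {\bf 1}$ of order $|\epsilon|^2$, which is actually better than what is needed.

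The only non-routine point is keeping track of decay rates through the compositions: each convolution of two exponentially localized kernels costs a small amount of exponential decay. This forces the constant $\alpha$ appearing in \eqref{eq:LocKatoNagy} to be strictly smaller than the rates in \eqref{GoodLoc} and \eqref{eq:PExpo}, but does not affect the $|\epsilon|$-smallness, which is inherited linearly from \eqref{GoodLoc}. The joint continuity of the kernel of $\mathsf{U}_b - {\bf 1}$ follows along the way from the joint continuity of the kernels of $P_b$ and $\mathcal{P}^{(b)}$ combined with dominated convergence applied to the convolution representations.
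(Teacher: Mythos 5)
Your proof is correct and follows essentially the same route as the paper: the identity $V_b-{\bf 1}=(\mathcal{P}^{(b)}-P_b)(2P_b-{\bf 1})$ is exactly what the paper extracts from $P_b^2+({\bf 1}-P_b)^2={\bf 1}$, and the square-root factor is handled, as in the paper, by Lemma \ref{lemma:IntExpo} applied to a real-analytic $F$ vanishing at $0$ (the paper uses $F(x)=(1-x^2)^{-1/2}-1$ on $\mathcal{P}^{(b)}-P_b$, you use $F(x)=(1-x)^{-1/2}-1$ on its square, which is equivalent). The remaining convolution and Schur-type estimates are the standard ones already used throughout the paper, so no gap.
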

\begin{proof}[Proof of Lemma \ref{Nagy}]
The function $F(x)=(1-x^2)^{-1/2}-1$ is real analytic on $(-1,1)$ and due to \eqref{GoodLoc} and Lemma \ref{lemma:IntExpo} we conclude that if $\epsilon_0$ is small enough, then $F\big ({\mathcal{P}}^{(b)}-P_{b}\big )$ has an integral kernel of order $\epsilon$ with an off-diagonal exponential decay.  By using that $P_{b}^2 +(\1-P_{b})^2=\1$, we observe that $\mathsf{U}_b-\1$ can be expressed as a sum of products involving at least one factor like 
 $F\big ({\mathcal{P}}^{(b)}-P_{b}\big )$ or ${\mathcal{P}}^{(b)}-P_{b}$, and we are done with proving \eqref{eq:LocKatoNagy}.
 \end{proof}

Finally, we want to prove \eqref{eq:EqIDS}.  By applying Lemma \ref{MainLemma} (whose proof is postponed to the Appendix \ref{sec:ContinuityDecay}) and  Lemma \ref{Nagy} to the projections $P_b$ and $\mathcal{P}^{(b)}$, we have that $\chi_L P_b$ and $\chi_L \mathcal{P}^{(b)}$ are trace class operators and 
$$\Tr (\chi_L P_b)-\Tr (\chi_L\mathcal{P}^{(b)})=\mathcal{O}(L)\, ,$$
thus \eqref{eq:EqIDS} immediately follows. 
\end{proof}

\section{Continuity in the strong topology and stability of the discrete spectrum}\label{sec4}

As a direct application of the operator-norm convergent expansion in \eqref{eq:ApproxSeriesP} we can prove strong convergence of magnetic projections.

\begin{proposition}
\label{prop:StrongTC}
The projection $P_b$ converges to $P_{b_0}$ in the strong operator topology.
\end{proposition}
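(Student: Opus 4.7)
The plan is to combine the operator-norm approximation already obtained in Proposition \ref{prop:ApproxP'} with a dominated-convergence argument applied to $\widetilde{P}_b$. By the triangle inequality, for every $\phi\in L^2(\R^2)$,
\[
\|(P_b-P_{b_0})\phi\|\leq \|P_b-\widetilde{P}_b\|\,\|\phi\|+\|(\widetilde{P}_b-P_{b_0})\phi\|,
\]
and the first term is $O(|\epsilon|)$ by \eqref{eq:ApproxP}. So I only need to show that $\widetilde{P}_b\to P_{b_0}$ strongly when $\epsilon=b-b_0\to 0$.

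For this, first I would write, using \eqref{eq:TildeP},
\[
\bigl((\widetilde{P}_b-P_{b_0})\phi\bigr)(\x)=\int_{\R^2}\bigl(e^{\iu\epsilon\varphi(\x,\y)}-1\bigr)P_{b_0}(\x;\y)\,\phi(\y)\,\mathrm{d}\y.
\]
For each fixed $\x$, the integrand tends to $0$ pointwise in $\y$ as $\epsilon\to 0$ because $\varphi(\x,\y)$ is a fixed continuous function, and it is dominated by $2|P_{b_0}(\x;\y)||\phi(\y)|$. The exponential decay \eqref{eq:PExpo} together with Cauchy--Schwarz guarantees that this dominating function lies in $L^1(\mathrm{d}\y)$, so dominated convergence yields $\bigl((\widetilde{P}_b-P_{b_0})\phi\bigr)(\x)\to 0$ for every $\x\in\R^2$.

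Next I would upgrade this pointwise convergence to $L^2$-convergence by a second application of dominated convergence in the outer variable $\x$. By Cauchy--Schwarz and \eqref{eq:PExpo},
\[
\Bigl|\bigl((\widetilde{P}_b-P_{b_0})\phi\bigr)(\x)\Bigr|^2\leq 4\Bigl(\int |P_{b_0}(\x;\y)|\,\mathrm{d}\y\Bigr)\Bigl(\int |P_{b_0}(\x;\y)|\,|\phi(\y)|^2\,\mathrm{d}\y\Bigr)\leq C\int |P_{b_0}(\x;\y)|\,|\phi(\y)|^2\,\mathrm{d}\y,
\]
and Fubini combined with another use of \eqref{eq:PExpo} shows that the right-hand side is an $L^1(\mathrm{d}\x)$ function with integral bounded by $C'\|\phi\|^2$. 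Dominated convergence in $\x$ then gives $\|(\widetilde{P}_b-P_{b_0})\phi\|\to 0$, which together with the first step proves $P_b\to P_{b_0}$ strongly.

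The conceptual obstacle to watch out for is that the Peierls phase $\varphi(\x,\y)$ is unbounded (growing like $\|\x\|\|\y\|$ by Proposition after Lemma \ref{lemma:MPhase}), which is precisely why strong convergence cannot in general be strengthened to norm convergence (and is the source of the topological obstruction discussed in Section \ref{sec5}). However, for strong convergence the unboundedness of $\varphi$ is harmless: with $\phi$ fixed, the off-diagonal decay of $P_{b_0}$ localizes the integration kernel enough that the pointwise smallness of $e^{\iu\epsilon\varphi}-1$ on compact sets is all one needs.
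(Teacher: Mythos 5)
Your proposal is correct. The core decomposition $P_b-P_{b_0}=(P_b-\widetilde{P}_b)+(\widetilde{P}_b-P_{b_0})$ and the use of \eqref{eq:ApproxP} for the first term are exactly what the paper does; the two arguments diverge only in how $\widetilde{P}_b-P_{b_0}\to 0$ strongly is established. The paper first reduces to $\eta\in C^{\infty}_0(\R^2)$ by density and then uses the quantitative bound $|e^{\iu\epsilon\varphi(\x,\y)}-1|\leq|\epsilon|\,|\varphi(\x,\y)|\leq C|\epsilon|\,\|\x\|\,\|\y\|$, which, thanks to the compact support of $\eta$, gives the explicit pointwise rate $|((\widetilde{P}_b-P_{b_0})\eta)(\x)|\leq C_\eta|\epsilon|e^{-\alpha\|\x\|/2}$ and hence an $O(|\epsilon|)$ rate in $L^2$ on the dense subspace. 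You instead work with an arbitrary $\phi\in L^2(\R^2)$ and apply dominated convergence twice, once in $\y$ and once in $\x$; your Cauchy--Schwarz domination $|((\widetilde{P}_b-P_{b_0})\phi)(\x)|^2\leq C\int|P_{b_0}(\x;\y)|\,|\phi(\y)|^2\,\mathrm{d}\y$ is indeed $\epsilon$-independent and in $L^1(\mathrm{d}\x)$ by Fubini and \eqref{eq:PExpo}, so both limit passages are legitimate. What your route buys is that it avoids the density step entirely; what it gives up is the explicit convergence rate on compactly supported data that the paper's version provides. Your closing remark about the unbounded growth of $\varphi$ being the obstruction to norm convergence is accurate and consistent with Section \ref{sec5}.
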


\begin{proof}
Since  $C^{\infty}_0(\R^2)$ is dense in $L^2(\R^2)$, it is enough to prove the convergence for every function in $C^{\infty}_0(\R^2)$. Let $\eta \in C^{\infty}_0(\R^2)$ and consider 
$
\left\| (P_b-P_{b_0})\eta \right\|=\big \| (P_b-\widetilde{P}_b+ \widetilde{P}_b - P_{b_0})\eta \big \|
$. From \eqref{eq:ApproxP} it is enough to show that 
$$
\lim_{\epsilon\to 0} \big \| (\widetilde{P}_b - P_{b_0})\eta \big \|=0 \, . 
$$
Using that $$\big | e^{\iu \epsilon  \varphi(\x,\y)}-1 \big |\leq |\epsilon| \, |\varphi(\x,\y)|\leq C\, |\epsilon| \, \|\x\|\, \|\y\|,$$
we have the following estimate:
\begin{equation*}
\big | (\widetilde{P}_b - P_{b_0})(\x;\y) \big | \leq \big | e^{\iu \epsilon  \varphi(\x,\y)}-1 \big |\,  |P_{b_0}(\x;\y)| \leq C\, |\epsilon| \,\|\x\|\, \|\y\|\, e^{-\alpha \|\x-\y\|}.
\end{equation*}
Since $\eta$ has compact support, this estimate allows us to write 
$$\big |\big ((\widetilde{P}_b - P_{b_0})\eta\big )(\x)\big |\leq C_\eta\, |\epsilon|\, e^{-\alpha \|\x\|/2},$$
which finishes the proof.
\end{proof}

Actually we can prove more. Let $W_\epsilon=H_{b_0+\epsilon}-H_{b_0}=\iu 2\epsilon A\cdot \nabla + \iu \epsilon \,  \textrm{div} A + \epsilon^2A^2$. Then, due to the exponential localization of the integral kernel of $(H_{b_0}-z\1)^{-1} $ (see \eqref{IntKernelResolvent} and \eqref{eq:estimateResolvent}) and for some fixed $\alpha>0$ we have the estimate 
\begin{equation}\label{hcj4}
\sup_{z\in \mathcal{C}}\left\|W_\epsilon (H_{b_0}-z\1)^{-1}e^{-\alpha \|\cdot\|}\right\| \leq C\, |\epsilon| 
\end{equation}
and 
\begin{equation}\label{hcj5}
\sup_{z\in \mathcal{C}}\left\| \big ((H_b-z\1)^{-1}-(H_{b_0}-z\1)^{-1}\big )e^{-\alpha \|\cdot\|}\right\| \leq C\, |\epsilon|\, . 
\end{equation}

This implies that the reduced resolvent $$R_b(\mu):=(\1-P_b)(H_{b}-\mu \1)^{-1} =\frac{\iu }{2\pi}\oint_{\mathcal{C}}\frac{1}{\mu-z} (H_b-z\1)^{-1}\, \mathrm{d}z $$
(where $\mu$ belongs to the interior of the contour $\mathcal{C}$) has the same property: 
\begin{equation}\label{hcj6}
\sup_{{\rm dist}(\mu,\mathcal{C})\geq \delta>0} \left\| \big (R_b(\mu)-R_{b_0}(\mu)\big )\,e^{-\alpha \|\cdot\|}\right\| \leq C\, |\epsilon|. 
\end{equation}
\subsection{Perturbation of discrete eigenvalues}
\label{sec:Eigenvalue}
Let us now consider the situation where the spectral island $\sigma_0 \equiv \sigma_{b_0}$, enclosed by $\mathcal{C}$, consists of an isolated discrete eigenvalue $e_0$ of $H_{b_0}$, with multiplicity $M\geq 1$. 

From a result by Kato \cite[VIII Theorem 2.6]{Kato} (see also \cite[Section $\S$ 6]{AvronHerbstSimon}) we have that $\sigma_{b_0}$ will evolve into a spectral island $\sigma_b$ that contains $M$ discrete eigenvalues $e^{(i)}_b$ (including multiplicities).  Let us show how this fact can be proved within magnetic perturbation theory. 

From the standard spectral theory of Schr\"odinger operators \cite{ReedSimonIV} we have that the  eigenfunctions of $H_{b_0}$ corresponding to discrete eigenvalues are exponentially localized, namely, for every $\psi \in L^2(\R^2)$ such that $H_{b_0}\psi = e_0 \psi $, there exists two constants $C , \alpha>0$ such that for almost every $\x \in \R^2$
\begin{equation*}
|\psi(\x)| \leq C e^{-\alpha \|\x \|} \, .
\end{equation*}

In this setting, the family of magnetic projections is not only continuous in the strong operator topology but it is also continuous in the uniform topology.

\begin{proposition}
\label{prop:NormC}
Let $P_{b_0}$ and $P_b$ be the orthogonal projections onto the spectral island $\sigma_{b_0}$ and $\sigma_b$ respectively. Assume that the $\sigma_{b_0}=\{e_0\}$ and the dimension of the eigenspace of $e_0$ is equal to $M< \infty$. Then
$$
\lim_{b \to b_0} \|P_b-P_{b_0}\| =0\, .
$$
\end{proposition}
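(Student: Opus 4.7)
The plan is to reduce the claim to estimating $\widetilde{P}_b - P_{b_0}$ in operator norm, and to exploit the fact that when $\sigma_0$ consists of a single isolated eigenvalue of finite multiplicity the kernel of $P_{b_0}$ is exponentially localized \emph{in each variable separately}, not merely in $\x-\y$.

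First, by the triangle inequality we write
\begin{equation*}
\|P_b - P_{b_0}\| \,\leq\, \|P_b - \widetilde{P}_b\| \,+\, \|\widetilde{P}_b - P_{b_0}\|.
\end{equation*}
The first term is bounded by $C|\epsilon|$ thanks to \eqref{eq:ApproxP} in Proposition \ref{prop:ApproxP'}, so it remains only to control the second term.

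For the second term I would use the finite-rank spectral representation of $P_{b_0}$. Since $e_0$ is an isolated eigenvalue of multiplicity $M$, there exist orthonormal eigenfunctions $\psi_1,\dots,\psi_M$ such that $P_{b_0}(\x;\y)=\sum_{j=1}^M \psi_j(\x)\overline{\psi_j(\y)}$, and by \eqref{eq:ExpEigen} each $\psi_j$ satisfies $|\psi_j(\x)|\leq C e^{-\alpha\|\x\|}$. Then, using the bound $|e^{\iu\epsilon\varphi(\x,\y)}-1|\leq |\epsilon|\,|\varphi(\x,\y)|\leq C|\epsilon|\,\|\x\|\,\|\y\|$ exactly as in \eqref{eq:AuxSRC}, I get the pointwise estimate
\begin{equation*}
\bigl|(\widetilde{P}_b - P_{b_0})(\x;\y)\bigr| \,\leq\, C|\epsilon|\,\|\x\|\,\|\y\|\,\sum_{j=1}^M |\psi_j(\x)|\,|\psi_j(\y)| \,\leq\, C'|\epsilon|\,\|\x\|e^{-\alpha\|\x\|}\,\|\y\|e^{-\alpha\|\y\|}.
\end{equation*}

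The crucial point — which is precisely what fails in the general case treated in Proposition \ref{prop:StrongTC} — is that the right-hand side above is a product of an $L^2$ function of $\x$ times an $L^2$ function of $\y$. Hence $\widetilde{P}_b - P_{b_0}$ is Hilbert--Schmidt with
\begin{equation*}
\|\widetilde{P}_b - P_{b_0}\|_{HS}^2 \,\leq\, C\epsilon^2 \left(\int_{\R^2}\|\x\|^2 e^{-2\alpha\|\x\|}\,\mathrm{d}\x\right)^{\!2} \,\leq\, C''\epsilon^2,
\end{equation*}
and a fortiori $\|\widetilde{P}_b - P_{b_0}\|\leq C|\epsilon|$, concluding the proof.

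I do not expect any serious obstacle here: the whole machinery has already been assembled in Proposition \ref{prop:ApproxP'}, and the only new input needed is the pointwise exponential decay of the eigenfunctions $\psi_j$ given by \eqref{eq:ExpEigen}. The slight subtlety worth highlighting is conceptual rather than technical: the polynomial growth of $\varphi(\x,\y)$, which was the obstruction to norm convergence in the generic gapped case, is now harmlessly absorbed by the separate exponential decay of $P_{b_0}(\x;\y)$ in both variables. In fact this argument yields the stronger statement that $\|P_b - P_{b_0}\|\leq C|\epsilon|$, i.e. Lipschitz continuity in $b$.
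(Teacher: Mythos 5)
Your proposal is correct and follows essentially the same route as the paper: the same decomposition $P_b-P_{b_0}=(P_b-\widetilde{P}_b)+(\widetilde{P}_b-P_{b_0})$, the same finite-rank representation of $P_{b_0}$ with exponentially localized eigenfunctions, and the same pointwise bound absorbing the polynomial growth of $\varphi$. The only cosmetic difference is that you conclude via the Hilbert--Schmidt norm where the paper invokes a Schur test; both are valid, and both yield the Lipschitz bound $\|P_b-P_{b_0}\|\leq C|\epsilon|$.
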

\begin{proof}
 As in the proof of Proposition \ref{prop:StrongTC}, we write
$ P_b-P_{b_0} =P_b-\widetilde{P}_b+ \widetilde{P}_b - P_{b_0} $. Since we have $\lim_{b \to b_0}\| P_b-\widetilde{P}_b\|=0$, it is left to show that $\lim_{b \to b_0} \|\widetilde{P}_b - P_{b_0}\| =0$. First, notice that the integral kernel of $P_{b_0}$ can be written as 
\begin{equation}
\label{eq:AuxPsiE}
P_{b_0}(\x;\y)=\sum_{i=1}^{M} {\psi_{i}(\x)} \overline{\psi_{i}(\y)} \, ,\quad |P_{b_0}(\x;\y)|\leq C\, e^{-\alpha\|\x\|} e^{-\alpha\|\y\|},
\end{equation}
where $\{\psi_i\}_{i=1}^M$ forms a basis of the eigenspace associated with $e_0$.
Then, we have the estimate
\begin{equation*}
\begin{aligned}
\left|\left(\widetilde{P}_b -P_{b_0}\right)(\x;\y) \right| &\leq \left| e^{\iu \epsilon \varphi(\x,\y)}-1 \right| \left|P_{b_0}(\x;\y)\right| \\ 
&
\leq |\epsilon|\, C \|\x\| \|\y\| e^{-\alpha(\|\x\|+\|\y\|)} \\
&\leq|\epsilon|\, C e^{-\alpha(\|\x\|+\|\y\|)/2} ,
\end{aligned}
\end{equation*}
where we have used that $|\varphi(\x,\y)| \leq C\, \|\x\| \|\y\|$. Then a Schur test \cite[Theorem 5.2]{HalmosSunder} finishes the proof.
\end{proof}

Proposition \ref{prop:NormC} shows that the family of projection $\{P_b\}$ is continuous with respect to the parameter $b$ in the uniform topology. By using the same strategy, and exploiting the explicit dependence on $b$ of $P_b$ in \eqref{eq:ApproxSeriesP}, one can also show that $P_b$ is actually smooth with respect to the parameter $b$. We will only sketch the proof of its differentiability. Namely, from \eqref{eq:ApproxSeriesP} we infer that 
$$P_b=\widetilde{P}_b-\frac{\iu }{2\pi}\oint_{\mathcal{C}} \, S_\epsilon(z) T_\epsilon(z)  \,\mathrm{d}z \, +\mathcal{O}(\epsilon^2)$$
in the uniform topology. Let $\mathfrak{W}_\epsilon$ be the operator whose integral kernel is given by 
\begin{align*}
&\mathfrak{W}_\epsilon(\x;\x') =2 e^{\iu \epsilon \varphi(\x,\x')}\frac{\iu }{2\pi}\oint_{\mathcal{C}} \int_{\R^2} (H_{b_0}-z\1 )^{-1}(\x;\y)\\
& \phantom{\,\,\,\, \,\, \mathfrak{W}_\epsilon(\x;\x') =2 e^{\iu \epsilon \varphi(\x,\x')}\frac{\iu }{2\pi}}\, \times \, A_{\x'}(\y)\cdot (-\iu \nabla_\y -b_0{A}(\y))(H_{b_0}-z {\bf 1})^{-1}(\y;\x') \, \mathrm{d}\y \, \mathrm{d}z.
\end{align*}
From \eqref{eqn:S(z)}, \eqref{eqn:T(z)} and \eqref{dhc1}, together with \eqref{IntKernelResolvent} and \eqref{eq:estimateResolvent}, we see that 
$$P_b=\widetilde{P}_b +\epsilon\, \mathfrak{W}_\epsilon +\mathcal{O}(\epsilon^2)$$
in the norm topology. By integrating with respect to $z$, and using that 
$$z\, \mapsto \, R_{b_0}(z)= (H_{b_0}-z\1 )^{-1}-\frac{1}{e_0-z}\, P_{b_0}$$
is analytic inside the integration contour, we obtain that $\mathfrak{W}_\epsilon$ consists of two terms, each containing a $P_{b_0}$ and a reduced resolvent $R_{b_0}(e_0)$. This implies that $\mathfrak{W}_\epsilon(\x;\x')$ is jointly exponentially localized in $\x$ and $\x'$, hence we may expand the exponential factors $e^{\iu \epsilon \varphi(\x,\x')}$ in both $\widetilde{P}_b$ and $\mathfrak{W}_\epsilon$ and obtain (we denote by $\Pi_1$ the operator with integral kernel $\iu \varphi(\x,\x')P_{b_0}(\x;\x')$):
$$P_b =P_{b_0} +\epsilon (\Pi_1+\mathfrak{W}_0) +\mathcal{O}(\epsilon^2),$$
which shows that the map $b\mapsto P_b$ is differentiable in the norm topology and its derivative at $b_0$ equals $\Pi_1+\mathfrak{W}_0$.  

Furthermore, Proposition \ref{prop:NormC} implies that for $b-b_0$ small enough, $P_b$ and $P_{b_0}$ are intertwined by a Kato-Nagy unitary $U_b$ (see \eqref{KN}), namely we have $P_b U_b = U_b P_{b_0}$. Since $P_b$ admits an asymptotic power series in $b-b_0$, see \eqref{eq:ApproxSeriesP}, the same holds for $U_b$, and, by following the strategy to prove Lemma \ref{Nagy}, we get that there exist $\alpha,C>0$ such that 
\begin{equation}
\label{eq:KNDiscrete}
|(U_b-\1)(\x;\y)| \leq |\epsilon| C  e^{-\alpha \|\x-\y\|}\, .
\end{equation}
The spectral island $\sigma_b$ will consist of exactly $M$ discrete eigenvalues of the finite dimensional matrix given by 
$$P_{b_0}U_b^* H_b U_b P_{b_0},\quad T_{ij}(b):=\langle U_b\psi_i, H_b U_b\psi_j\rangle,\quad P_{b_0}\psi_i=\psi_i.$$

Let us for simplicity assume that $M=1$, i.e. $e_0$ is simple and $H_{b_0}\psi=e_0\psi$. We have that $U_b\psi$ is an eigenvector for $H_b$ for the eigenvalue $e_\epsilon$, also exponentially localized, and: 
$$P_b H_b=\frac{\iu}{2\pi} \oint_{\mathcal{C}} z(H_b-z\1)^{-1} \mathrm{d}z,\quad e_\epsilon=\frac{\iu}{2\pi} \oint_{\mathcal{C}} \langle U_b\psi, z(H_b-z\1)^{-1}U_b\psi\rangle \, \mathrm{d}z$$
which, together with \eqref{hcj5} and \eqref{eq:KNDiscrete}, shows that $|e_\epsilon-e_0|\leq C\, |\epsilon|$.  Moreover, by using that $W_\epsilon=H_{b}-H_{b_0}$ we have 
$$H_b(P_b-P_{b_0})=H_bP_b - (H_b-H_{b_0})P_{b_0}-H_{b_0}P_{b_0}=e_\epsilon P_b-e_0P_{b_0} -W_\epsilon P_{b_0}$$
hence 
\begin{equation}\label{hcj7}
\Vert H_b(P_b-P_{b_0})\Vert \leq C\, |\epsilon|. 
\end{equation}

We now follow the strategy from Savoie \cite{Savoie}, who used the Feshbach formula in order to get a much better location of the perturbed eigenvalue $e_\epsilon$, up to an error of order $\epsilon^3$. The next lemma is necessary for the control of the resolvent of $H_b$ restricted to $\1-P_{b_0}$:
\begin{lemma}\label{lemmahc4}
    Let $\widetilde{H}_b:=(\1-P_{b_0})H_b (\1-P_{b_0}).$ Then if $|\mu-e_0|\leq C |\epsilon|$, we have that $\widetilde{H}_b-\mu (\1-P_{b_0})$ is invertible on the range of the operator $\1-P_{b_0}$, and if $R_{b}(\mu)=(\1-P_{b})(H_{b}-\mu\mathbf{1})^{-1}$ is the reduced resolvent of $H_b$ on the range of the operator $\1-P_b$, then 
    \begin{equation}\label{hcj8}
 \left\| \big (\widetilde{H}_b-\mu (\1-P_{b_0})\big )^{-1}-(\1-P_{b_0})\, R_{b}(\mu)(\1-P_{b_0})\, \right\| \leq C\, |\epsilon|. 
\end{equation}
\end{lemma}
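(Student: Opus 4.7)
The plan is to exhibit $Q_0 R_{b}(\mu) Q_0$, with $Q_0 := \1 - P_{b_0}$, as an approximate inverse of $\widetilde{H}_b - \mu Q_0$ on $\Ran(Q_0)$ with error of order $|\epsilon|$. Both operators leave $\Ran(Q_0)$ invariant, so a Neumann expansion will then yield both the invertibility of $\widetilde{H}_b - \mu Q_0$ on $\Ran(Q_0)$ and the bound \eqref{hcj8} in one stroke.

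The central algebraic step is to insert $\1 = Q_0 + P_{b_0}$ in front of $R_b(\mu)$ and to exploit the two identities $(H_b - \mu\1)R_b(\mu) = \1 - P_b$ (which follows at once from the Riesz representation of $R_b(\mu)$ together with a direct contour computation) and $H_{b_0} P_{b_0} = e_0 P_{b_0}$. A short direct computation then gives
$$(\widetilde{H}_b - \mu Q_0)\, Q_0 R_b(\mu) Q_0 \;=\; Q_0 - Q_0 P_b Q_0 - Q_0 W_\epsilon P_{b_0} R_b(\mu) Q_0,$$
so the whole problem reduces to showing that the last two terms have operator norm of order $|\epsilon|$. The first rewrites as $Q_0(P_b - P_{b_0})Q_0$ thanks to $Q_0 P_{b_0} = 0$, and Proposition~\ref{prop:NormC} immediately gives the bound. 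The second decomposes as a product involving $W_\epsilon P_{b_0}$ and the uniformly bounded operator $R_b(\mu) Q_0$; uniform boundedness of $R_b(\mu)$ follows at once from its Riesz representation and the fact that $\mathcal{C}$ is a compact subset of $\rho(H_b)$.

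The genuine obstacle is controlling $\|W_\epsilon P_{b_0}\|$, because $W_\epsilon$ is a first-order differential operator with polynomially growing coefficients and is therefore unbounded on $L^2$. The trick I would use is to factorize, for some fixed $z \in \mathcal{C}$,
$$W_\epsilon P_{b_0} \;=\; (e_0 - z)\, W_\epsilon (H_{b_0} - z\1)^{-1}\, P_{b_0} \;=\; (e_0 - z)\,\big(W_\epsilon (H_{b_0} - z\1)^{-1} e^{-\alpha\|\cdot\|}\big)\,\big(e^{\alpha\|\cdot\|} P_{b_0}\big).$$
The first factor has norm of order $|\epsilon|$ by the a priori estimate \eqref{hcj4}, while the second is bounded by a Schur test: in the discrete-eigenvalue setting considered here, the sharper double-sided decay $|P_{b_0}(\x;\y)| \leq C e^{-\alpha'(\|\x\|+\|\y\|)}$ inherited from \eqref{eq:ExpEigen} lets us absorb the exponential weight as long as $\alpha < \alpha'$.

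Putting everything together yields $(\widetilde{H}_b - \mu Q_0)\, Q_0 R_b(\mu) Q_0 = Q_0 + E$ on $\Ran(Q_0)$ with $\|E\|\leq C|\epsilon|$; the symmetric computation multiplying on the other side is entirely analogous and confirms two-sided invertibility. A Neumann inversion of $Q_0 + E$ on $\Ran(Q_0)$ then proves that $\widetilde{H}_b - \mu Q_0$ is invertible there and gives the quantitative bound \eqref{hcj8}.
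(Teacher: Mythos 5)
Your proposal is correct and follows essentially the same route as the paper: both exhibit $(\1-P_{b_0})R_b(\mu)(\1-P_{b_0})$ as an approximate right inverse with $\mathcal{O}(\epsilon)$ error and conclude by a Neumann series, the error being controlled through $\|P_b-P_{b_0}\|=\mathcal{O}(\epsilon)$ and through \eqref{hcj4} combined with the exponential decay of the eigenfunctions to bound the unbounded factor $W_\epsilon P_{b_0}$. The only difference is bookkeeping: the paper first replaces $\widetilde H_b-\mu(\1-P_{b_0})$ by $(\1-P_b)(H_b-\mu\1)(\1-P_b)+\mathcal{O}(\epsilon)$ via \eqref{hcj7} and then estimates the commutator $[W_\epsilon,P_{b_0}]$, whereas you insert $\1=(\1-P_{b_0})+P_{b_0}$ and use $(H_b-\mu\1)R_b(\mu)=\1-P_b$ directly.
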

\begin{proof}
Using \eqref{hcj7} we get 
$$\widetilde{H}_b-\mu (\1-P_{b_0}) =(\1-P_b)(H_b-\mu \1)(\1-P_b)+ \mathcal{O}(\epsilon)$$
hence 
\begin{equation}
\label{eq:F}
\begin{aligned}
    &\Big (\widetilde{H}_b-\mu (\1-P_{b_0})\Big )\, (\1-P_{b_0})\, R_b(\mu)\, (\1-P_{b_0})\\
    &\qquad =(\1-P_{b_0})-(\1-P_{b_0})[W_\epsilon,P_{b_0}]R_b(\mu)(\1-P_{b_0})+\mathcal{O}(\epsilon)\\
    &\qquad =(\1-P_{b_0})\big (\1+\mathcal{O}(\epsilon)\big )(\1-P_{b_0}),
\end{aligned}
\end{equation}
where in the last equality we have used \eqref{hcj4} to estimate the norm of the commutator (which in principle can be first defined as a quadratic form on $C^{\infty}_0(\R^2)$). Therefore the r.h.s of \eqref{eq:F} is invertible when $|\epsilon|$ is small enough.
\end{proof}

The Feshbach formula \cite{Feshbach,Savoie} applied to $H_b$ and $P_{b_0}=|\psi\rangle\langle\psi|$ states that a real number $\mu$ which obeys $|\mu-e_0|\leq C|\epsilon|$ is an eigenvalue for $H_b$ if and only if 
\begin{equation}\label{hcj10}
\mu=e_0+\langle \psi,W_\epsilon \psi\rangle  -\langle W_\epsilon\psi, \big (\widetilde{H}_b-\mu (\1-P_{b_0})\big )^{-1}W_\epsilon\psi\rangle. 
\end{equation}
This is nothing but a fixed-point equation whose iterations starting from $\mu=e_0$ give us an asymptotic expansion in $\epsilon$ of $e_\epsilon$. We a-priori know that $e_\epsilon=e_0+\mathcal{O}(\epsilon)$ and $W_\epsilon\psi =\mathcal{O}(\epsilon)$, hence using \eqref{hcj8} and \eqref{hcj6}, together with the exponential localization of the eigenfunction (see \eqref{eq:AuxPsiE}) and \eqref{eq:estimateResolvent},
we obtain
$$\langle W_\epsilon\psi, \big (\widetilde{H}_b-\mu (\1-P_{b_0})\big )^{-1}W_\epsilon\psi\rangle-\langle W_\epsilon\psi, R_{b_0}(e_0)W_\epsilon\psi\rangle=\mathcal{O}(\epsilon^3),$$
which leads to 
$$e_\epsilon=e_0+ \langle \psi,W_\epsilon \psi\rangle  -\langle W_\epsilon\psi, R_{b_0}(e_0)W_\epsilon\psi\rangle\, + \mathcal{O}(\epsilon^3).$$

\section{Lack of continuity in the norm topology}\label{sec5}
Proposition \ref{prop:ApproxP}, in particular \eqref{eq:EqIDS}, together with Lemma \ref{MainLemma}, plays a crucial role in \cite{CorneanMonacoMoscolari2021}. By exploiting the gauge covariant magnetic perturbation theory described here, in \cite{CorneanMonacoMoscolari2021} the magnetic derivative of the integrated density of states has been analyzed in the context of Bloch-Landau Hamiltonians, namely under the assumptions that the scalar potential $V$ is smooth and $\Z^2$ periodic, and the magnetic field can be decomposed into a constant magnetic field plus a periodic field with zero-mean flux, that is the total magnetic vector potential can be written as $bA_L+\mathcal{A}$ for $b \in \R$ and $\mathcal{A}$ is smooth and $\Z^2$ periodic. In such a framework, it has been showed \cite{CorneanMonacoMoscolari2021} that the integrated density of states of $P_b$ exists, and it is a linear function of the magnetic field. In particular, we have that
\begin{equation}
\label{eq:JEMS}
\mathcal{I}(P_b)= c_0 + \frac{b}{2\pi} c_1
\end{equation}
where $c_0\in \mathbb{Q}$ and $ c_1\in \Z$ are constants that characterize the spectral gaps surrounding the spectral island $\sigma_{b_0}$, which means that \eqref{eq:JEMS} is valid as long as the magnetic perturbation does not close the spectral gaps defining the isolated part of the spectrum $\sigma_b$. The constants $c_0$ and $c_1$ can be explicitly computed: in particular,  the constant $c_1$ coincides with the Chern character of the projection $P_b$ (and of $P_{b_0}$) defined in \eqref{eq:Chern}, see \cite{CorneanMonacoMoscolari2021} for more details.

After having established the convergence in the strong operator topology, it is natural to analyze the family of projections in the uniform topology. However, magnetic projections are generally not continuous in $b$ with respect to the uniform topology: 
\begin{proposition}\label{prophc5}
If the Chern character of $P_{b_0}$, $C(P_{b_0}),$ is different from zero, namely, if in \eqref{eq:JEMS} $c_1\neq 0$, then it must hold that $\|P_b-P_{b_0}\|=1$ for all $b\neq b_0$ in a small interval around $b_0$.
\end{proposition}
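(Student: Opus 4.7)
The plan is to argue by contradiction: assume that there exists some $b\neq b_0$ with $|b-b_0|$ small such that $\|P_b-P_{b_0}\|<1$, and derive from this assumption that $\mathcal{I}(P_b)=\mathcal{I}(P_{b_0})$. Since \eqref{eq:JEMS} gives $\mathcal{I}(P_b)-\mathcal{I}(P_{b_0})=\tfrac{b-b_0}{2\pi}\,c_1\neq 0$ under the standing hypothesis $c_1\neq 0$, this will provide the desired contradiction.

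Under the assumption $\|P_b-P_{b_0}\|<1$, the Kato--Nagy construction applied directly to the pair $(P_{b_0},P_b)$, exactly as in Lemma~\ref{Nagy}, produces a unitary
\begin{equation*}
\mathsf{U}_b=\bigl[\1-(P_b-P_{b_0})^2\bigr]^{-1/2}\bigl(P_b P_{b_0}+(\1-P_b)(\1-P_{b_0})\bigr)
\end{equation*}
which intertwines the two projections: $P_b\mathsf{U}_b=\mathsf{U}_b P_{b_0}$. The first step is to show that $\mathsf{U}_b-\1$ has an exponentially off-diagonally decaying integral kernel. Starting from the algebraic rearrangement $P_bP_{b_0}+(\1-P_b)(\1-P_{b_0})=\1+(P_b-P_{b_0})(2P_{b_0}-\1)$ and using the exponential localization of $P_b$ and $P_{b_0}$ (both inherited from \eqref{eq:PExpo}), the non--functional-calculus factor manifestly has exponentially localized kernel. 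The inverse square-root factor $[\1-(P_b-P_{b_0})^2]^{-1/2}$ can be handled by expanding the real-analytic series $F(x)=(1-x^2)^{-1/2}-1$ around $0$ and invoking Lemma~\ref{lemma:IntExpo}, possibly in a Combes--Thomas-type variant: the operator-norm convergence of the series is ensured by $\|P_b-P_{b_0}\|<1$, and the off-diagonal decay of the kernels of its terms is inherited from the off-diagonal exponential decay of $P_b-P_{b_0}$.

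Once $|(\mathsf{U}_b-\1)(\x;\y)|\leq C\,e^{-\alpha\|\x-\y\|}$ is established, unitary conjugation and cyclicity of the trace give
\begin{equation*}
\Tr(\chi_L P_b)-\Tr(\chi_L P_{b_0})=\Tr\bigl(\mathsf{U}_b^*\,[\chi_L,\mathsf{U}_b-\1]\,P_{b_0}\bigr).
\end{equation*}
The commutator's kernel $(\chi_L(\x)-\chi_L(\y))(\mathsf{U}_b-\1)(\x;\y)$ is supported on pairs straddling $\partial\Lambda_L$, and by the same boundary-term argument as in Lemma~\ref{MainLemma} its trace norm is of order $L$ (the perimeter). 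Dividing by $4L^2$ and letting $L\to\infty$ yields $\mathcal{I}(P_b)=\mathcal{I}(P_{b_0})$, which contradicts \eqref{eq:JEMS}. The main technical obstacle is precisely the first step, namely obtaining exponential off-diagonal decay of $\mathsf{U}_b-\1$ without the small-$\epsilon$ hypothesis that made Lemma~\ref{Nagy} immediate: here the individual kernels of $P_b-P_{b_0}$ are only exponentially localized with an $\mathcal{O}(1)$ constant, so a weighted-$L^2$ / Combes--Thomas argument is needed to upgrade operator-norm convergence of the series for $F$ into the required pointwise kernel estimate.
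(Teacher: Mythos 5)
Your proposal is correct and follows essentially the same route as the paper: contradiction via the Kato--Nagy unitary, exponential off-diagonal decay of $\mathsf{U}_b-\1$, and the boundary-term trace estimate forcing $\mathcal{I}(P_b)=\mathcal{I}(P_{b_0})$, contradicting \eqref{eq:JEMS}. The ``Combes--Thomas-type variant'' of Lemma~\ref{lemma:IntExpo} that you correctly identify as the one genuine technical obstacle is precisely what the paper supplies in Appendix~\ref{sec:ContinuityDecay} as Lemma~\ref{lemmahc1} (via conjugation by the weights $e^{\alpha'\|\cdot-\x_0\|}$) and packages as Lemma~\ref{MainLemma'}, so your sketch is complete once you invoke those.
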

\begin{proof}
    Assume by contradiction that for some $|b-b_0|>0$ small enough, we have $\|P_b-P_{b_0}\| <1$. Then, there exists a Kato-Nagy unitary $U_b$ that intertwines between $P_b$ and $P_{b_0}$. From Lemma \ref{MainLemma'} we have that  $U_b-\1$ has an exponentially localized integral kernel. By applying Lemma \ref{MainLemma}, we have that $\mathcal{I}(P_b)=\mathcal{I}(P_{b_0})$, but this is in contradiction with \eqref{eq:JEMS}. Hence $\|P_b-P_{b_0}\|=1$ (remember that the norm of the difference of two projections is always less or equal than one).
\end{proof}
The above failure of the norm continuity relies on the fact that the Chern character of the spectral projection is different from zero. To show a simple explicit example, let us consider the situation of the pure Landau Hamiltonian $H_b=(-\iu \nabla -b A_L)^2$ with $b>0$. The lowest spectral island is $\sigma_b=\{b\}$, an eigenvalue of infinite multiplicity, whose corresponding spectral projection $P_b$ is given by the integral kernel
$$P_b(\x;\y)=\frac{b}{2\pi} e^{\iu b \varphi(\x,\y)}\, e^{-b\|\x-\y\|^2/4},$$
where $\varphi(\x,\y)$ is given in \eqref{hcj1}. Then $\mathcal{I}(P_b)=b/(2\pi)$, namely  $c_0=0$ and $c_1=1$ in \eqref{eq:JEMS}. This shows that $P_b$ is nowhere continuous in the uniform norm. 

Finally, it is worth mentioning that magnetic spectral projections might fail to be norm continuous even in the topologically trivial case, i.e. when their Chern character is equal to zero, as it is shown in \cite[Corollary 1.2]{CorneanMonacoMoscolari2021} (see also the discussions in \cite{Nenciu1991, CorneanHerbstNenciu}) by using the concept of exponentially localized magnetic Wannier functions.

\appendix

\section{Continuity and off-diagonal decay}
\label{sec:ContinuityDecay}
In this section we collect three more general technical lemmas that we used in the previous proofs. Lemma \ref{lemmahc1} and \ref{MainLemma'} deal with the continuity and exponential localization of the integral kernel of an operator power series. They are a generalization of the result contained in \cite[Appendix C]{CorneanMonacoMoscolari2021}. Moreover, Lemma \ref{MainLemma} provides an estimate on the difference of local traces for projections that can be intertwined by a Kato-Nagy unitary. 

We start with a generalization of the result of Lemma \ref{lemma:IntExpo}. Instead of requiring a uniform control on the $\epsilon$ dependence of the integral kernel of the operator $D_\epsilon$, we only require the exponential localization of the integral kernel and the control on the smallness of the operator norm.

\begin{lemma}\label{lemmahc1}
Let $F(x)=\sum_{n\geq 1} a_nx^n$ be a power series with convergence radius $r>0$. Let $D$ be a bounded operator with a jointly continuous integral kernel, for which there exist $C,\alpha>0$ such that 
$$|D(\x;\y)|\leq C\, e^{-\alpha\|\x-\y\|}\, \qquad \forall \, \x\, , \y \in \R^2 .$$
If $\|D\|<r$, then there exist $0<\alpha'<\alpha$ and $C'>0$ such that $F(D)$ also has a jointly continuous integral kernel and 
$$\left|\big (F(D)\big )(\x;\y)\right|\leq C'\, e^{-\alpha'\|\x-\y\|}\, \quad \forall \, \x\, , \y \in \R^2 .$$
\end{lemma}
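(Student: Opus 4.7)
The plan is to combine a factorization of $F$ with a Combes--Thomas conjugation to extract the exponential decay. Since $F(0)=0$, one can write $F(\mu) = a_1\mu + \mu^2 H(\mu)$ with $H(\mu) := \sum_{k\geq 0} a_{k+2}\mu^k$ still analytic on $|\mu|<r$. Because $\|D\|<r$, the operator $H(D)=\sum_{k\geq 0}a_{k+2}D^k$ converges in operator norm and
\[F(D) = a_1 D + D\, H(D)\, D.\]
Joint continuity of the kernel of $F(D)$ is then immediate: the $a_1 D$ term is continuous by hypothesis, while $D H(D) D$ has a jointly continuous kernel by Lemma \ref{lemma:jointCont} applied with $A=D$, $B=H(D)$, $C=D$; the required $L^2$-continuity of $\y\mapsto D(\cdot;\y)$ follows from $|D(\x;\y)|\leq Ce^{-\alpha\|\x-\y\|}$ and the joint continuity of $D$ via dominated convergence.

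To extract the exponential decay I would perform a Combes--Thomas conjugation. Fix $\x_0,\y_0\in\R^2$ with $\x_0\neq\y_0$, set $g(\z):=\hat{\mathbf n}\cdot\z$ with $\hat{\mathbf n}:=(\x_0-\y_0)/\|\x_0-\y_0\|$ (so $g$ is $1$-Lipschitz and $g(\x_0)-g(\y_0)=\|\x_0-\y_0\|$), and for a small parameter $\alpha'>0$ define $\widetilde D$ as the integral operator with kernel
\[\widetilde D(\u;\v):=e^{\alpha'(g(\u)-g(\v))}D(\u;\v),\qquad |\widetilde D(\u;\v)|\leq Ce^{-(\alpha-\alpha')\|\u-\v\|}.\]
A Schur test using $|e^{\alpha'(g(\u)-g(\v))}-1|\leq\alpha'\|\u-\v\|e^{\alpha'\|\u-\v\|}$ yields $\|\widetilde D - D\|_{op} \leq C(\alpha') \to 0$ as $\alpha'\to 0$, with the bound uniform over all $1$-Lipschitz choices of $g$; hence for $\alpha'$ small enough $\|\widetilde D\|_{op}<r$ uniformly, so $F(\widetilde D)$ is well-defined. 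An induction on $n$ shows $\widetilde D^n(\u;\v)=e^{\alpha'(g(\u)-g(\v))}D^n(\u;\v)$, and the elementary Cauchy--Schwarz bound $|D^n(\x;\y)|\leq M^2 \|D\|^{n-2}$ for $n\geq 2$ (with $M:=\sup_\x\|D(\x;\cdot)\|_{L^2}<\infty$, obtained by writing $D^n=D\cdot D^{n-2}\cdot D$) ensures that $\sum_n a_n \widetilde D^n(\u;\v)$ converges absolutely to $e^{\alpha'(g(\u)-g(\v))}F(D)(\u;\v)$.

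The final step is a uniform pointwise bound on $F(\widetilde D)$. Applying the same factorization, $F(\widetilde D)=a_1\widetilde D + \widetilde D H(\widetilde D)\widetilde D$, and Cauchy--Schwarz gives
\[\big|[\widetilde D H(\widetilde D)\widetilde D](\x_0;\y_0)\big| \leq \|\widetilde D(\x_0;\cdot)\|_{L^2}\,\|H(\widetilde D)\|_{op}\,\|\widetilde D(\cdot;\y_0)\|_{L^2} \leq M',\]
since $\|\widetilde D(\x;\cdot)\|_{L^2}^2\leq C^2\int e^{-2(\alpha-\alpha')\|\z\|}\,\mathrm{d}\z$ and $\|H(\widetilde D)\|_{op}\leq\sum_{k\geq 0}|a_{k+2}|\|\widetilde D\|_{op}^k$ are both uniform in $\x,\y$ and in the choice of $1$-Lipschitz $g$. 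Combined with $|a_1\widetilde D(\x_0;\y_0)|\leq|a_1|C$, this gives $|F(\widetilde D)(\x_0;\y_0)|\leq M''$ uniformly in $\x_0,\y_0$, and hence
\[|F(D)(\x_0;\y_0)| = e^{-\alpha'\|\x_0-\y_0\|}|F(\widetilde D)(\x_0;\y_0)| \leq M''\, e^{-\alpha'\|\x_0-\y_0\|}.\]
The main subtlety I expect is handling the formally unbounded multiplication $e^{\alpha' g}$; my plan circumvents this by defining $\widetilde D$ directly through its kernel and verifying the identity $\widetilde D^n(\u;\v)=e^{\alpha'(g(\u)-g(\v))}D^n(\u;\v)$ inductively at the kernel level, so that the unbounded multiplication never appears as an operator in the argument.
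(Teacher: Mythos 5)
Your proof is correct and follows essentially the same route as the paper: the factorization $F(x)=a_1x+x^2G(x)$, a weighted (Combes--Thomas type) conjugation controlled by a Schur test to extract the exponential decay, Cauchy--Schwarz on the middle factor, and Lemma \ref{lemma:jointCont} for joint continuity. The only differences are cosmetic: the paper uses the radial weight $e^{\alpha'\|\cdot-\x_0\|}$ and conjugates only the middle factor $G(D)$ inside the scalar product, whereas you use a linear weight and conjugate the whole power series at the kernel level, which costs you the extra (correctly handled) verification that $\sum_n a_n D^n(\x;\y)$ converges absolutely pointwise.
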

\begin{proof}
We first show that there exists $C>0$ such that for all $\x_0\in \R^2$ and $0<\alpha'<\alpha/2$ we have  
\begin{equation}\label{hcj3}
\|e^{\alpha' \| \cdot -\x_0\|} De^{-\alpha' \|\cdot -\x_0\|}-D\| \leq C \alpha'.
\end{equation}
Indeed, the integral kernel of that operator is 
$$\Big (e^{\alpha' \big (\| \x -\x_0\|- \|\y -\x_0\|\big )}-1\Big ) D(\x;\y).$$
From the reverse triangle inequality we have $| \|\x-\x_0\|-\|\y-\x_0\| | \leq \| \x-\y\|$, which coupled with the inequality $|e^{|x|}-1|\leq |x| e^{|x|}$ gives
$$
\Big | e^{\alpha' \big (\| \x -\x_0\|- \|\y -\x_0\|\big )}-1\Big |\leq \Big | e^{\alpha' \big |\| \x -\x_0\|- \|\y -\x_0\|\big |}-1\Big |\leq \alpha'\|\x-\y\| e^{\alpha' \|\x-\y\|} \, .
$$
Thus for all $\x,\y,\x_0\in \R^2$ we have 
$$\Big | \Big (e^{\alpha' \big (\| \x -\x_0\|- \|\y -\x_0\|\big )}-1\Big ) D(\x;\y)\Big |\leq \alpha' \, C\, e^{-\alpha \| \x -\y\|/4},$$
and the Schur test \cite[Theorem 5.2]{HalmosSunder} finishes the proof of \eqref{hcj3}. Thus, if $\alpha'$ is small enough then $$\|e^{\alpha' \| \cdot -\x_0\|} De^{-\alpha' \|\cdot -\x_0\|}\|<r.$$ 

We have
$$F(x)=a_1x +x^2 G(x),\quad G(x)=\sum_{n\geq 0}a_{n+2}\, x^n,$$
where $G$ has the same convergence radius. Then 
$$F(D)=a_1\, D + D\, G(D)\, D, $$
and 
$$e^{\alpha' \| \cdot -\x_0\|} G(D)e^{-\alpha' \|\cdot -\x_0\|}=G\big (e^{\alpha' \| \cdot -\x_0\|} De^{-\alpha' \|\cdot -\x_0\|}\big )$$
are bounded operators. Moreover, $DG(D)D$ is an integral operator with an integral kernel given by 
$$ \big (DG(D)D\big )(\x;\y)= \langle \overline{D(\x;\cdot)}, G(D) D(\cdot; \y)\rangle\, . $$
Let us fix a compact set $\Omega\subset \R^2$. Let $n\geq 1$. Due to the exponential localization of $D$, there exists another compact set $\Omega_n$ containing $\Omega$ such that 
$$\sup_{\x,\y\in \Omega}\Big |\langle \overline{D(\x;\cdot)}, G(D) D(\cdot; \y)\rangle -\langle \overline{D(\x;\cdot)}, \chi_{\Omega_n}G(D) \chi_{\Omega_n}D(\cdot; \y)\rangle\Big |\leq 1/n.$$
We apply Lemma \ref{lemma:jointCont} to $D\chi_{\Omega_n}G(D) \chi_{\Omega_n}D$ and we get that the scalar products
$$\langle \overline{D(\x;\cdot)}, \chi_{\Omega_n}G(D) \chi_{\Omega_n}D(\cdot; \y)\rangle$$ define jointly continuous functions, hence $\langle \overline{D(\x;\cdot)}, G(D) D(\cdot; \y)\rangle$ is also jointly continuous as a uniform limit. 

Finally, putting $\x_0=\y$ we have 
$$
\begin{aligned}
&\langle \overline{D(\x;\cdot)}, G(D) D(\cdot;\y)\rangle=\langle \overline{D(\x;\cdot)}e^{-\alpha' \| \cdot -\y\|}, G\big (e^{\alpha' \| \cdot -\y\|} De^{-\alpha' \|\cdot -\y\|}\big ) e^{\alpha' \|\cdot -\y\|} D(\cdot; \y)\rangle
\end{aligned}
$$
which shows that it is uniformly bounded by $C'\, e^{-\alpha'\|\x-\y\|}$. 
\end{proof}

The next result generalizes Lemma \ref{Nagy} to the setting in which one can control only the norm difference of the projections instead of the difference of the integral kernels. The proof is based on Lemma \ref{lemmahc1}.

\begin{lemma}
	\label{MainLemma'}
	Let $P_1$ and $P_2$ be two orthogonal projections which have jointly continuous integral kernels such that 
 $$|P_j(\x,\y)|\leq C\, e^{-\alpha\|\x-\y\|} \qquad \forall \, \x\, , \y \in \R^2 ,\, j\in \{1,2\},$$
 for some fixed $\alpha>0$ and $C<\infty$, and $\|P_1-P_2\|<1$.    
Then the Kato-Nagy intertwining unitary $U$ given by 
\begin{equation} \label{KN}
		\begin{aligned}
			U&=\left[{\bf 1}-\left(P_1-P_2\right)^{2}\right]^{-1/2} \big (P_1P_2+ \left({\bf 1} - P_1 \right) \left({\bf 1} - P_{2} \right) \big )
		\end{aligned}
	\end{equation}
and for which $P_1U=U P_2$, has the property that $U-\1$ has a jointly continuous integral kernel and there exist $C'>0$ and $0<\alpha'<\alpha$ such that 
$$|(U-\1)(\x;\y)|\leq C'\, e^{-\alpha'\|\x-\y\|}.$$
  \end{lemma}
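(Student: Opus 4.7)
The strategy is to reduce the required properties of $U-\1$ to a direct application of Lemma \ref{lemmahc1}, together with the computational identity $\1=P_1^2+(\1-P_1)^2$ used in the proof of Lemma \ref{Nagy}. As a preliminary observation, note that $P_1-P_2$ inherits joint continuity and the exponential off-diagonal bound with rate $\alpha$ from $P_1$ and $P_2$. The square $(P_1-P_2)^2$ therefore has integral kernel $\int (P_1-P_2)(\x;\z)(P_1-P_2)(\z;\y)\,\mathrm{d}\z$, which by the standard splitting $\alpha=\alpha_0+(\alpha-\alpha_0)$ combined with the triangle inequality $\|\x-\y\|\le \|\x-\z\|+\|\z-\y\|$ still enjoys exponential decay at some rate $\alpha_0\in(0,\alpha)$, while joint continuity of this kernel follows from dominated convergence using the exponentially decaying majorant. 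Moreover $\|(P_1-P_2)^2\|\le\|P_1-P_2\|^2<1$.

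The function $F(x)=(1-x)^{-1/2}-1=\sum_{n\ge 1}a_n x^n$ has radius of convergence $r=1$. Applying Lemma \ref{lemmahc1} to $D:=(P_1-P_2)^2$ then yields that $F((P_1-P_2)^2)=[\1-(P_1-P_2)^2]^{-1/2}-\1$ has a jointly continuous integral kernel with exponential decay at some rate $\alpha_1\in(0,\alpha_0)$. Setting $T:=P_1P_2+(\1-P_1)(\1-P_2)$, the Kato--Nagy unitary decomposes as
$$U-\1=\bigl(F((P_1-P_2)^2)+\1\bigr)T-\1=F((P_1-P_2)^2)+F((P_1-P_2)^2)(T-\1)+(T-\1),$$
so it suffices to verify the required properties for $T-\1$.

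For this last step, using $\1=P_1^2+(\1-P_1)^2$ a short direct computation gives
$$T-\1=(\1-2P_1)(P_1-P_2)=(P_1-P_2)-2\,P_1(P_1-P_2),$$
a linear combination of $P_1-P_2$ and a product of two operators satisfying our standing hypotheses; both summands thus have jointly continuous, exponentially localized integral kernels by the same convolution argument used in the first paragraph. The middle term $F((P_1-P_2)^2)(T-\1)$ is again a product of operators of the same type and is handled identically. Combining the three contributions and absorbing the various small rate reductions into a single $\alpha'\in(0,\alpha)$ and constant $C'>0$ produces the desired bound $|(U-\1)(\x;\y)|\le C'e^{-\alpha'\|\x-\y\|}$, while joint continuity of the full kernel follows from joint continuity of each piece. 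The only delicate point is purely bookkeeping, namely controlling the cumulative loss in decay rate through each composition via the splitting trick; the joint-continuity propagation through products is routine once the exponentially decaying dominating function is identified.
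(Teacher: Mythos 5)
Your proposal is correct and follows essentially the same route as the paper: the core step is an application of Lemma \ref{lemmahc1} to the factor $[\1-(P_1-P_2)^2]^{-1/2}-\1$ (the paper uses $F(x)=(1-x^2)^{-1/2}-1$ applied to $P_1-P_2$, you use $F(x)=(1-x)^{-1/2}-1$ applied to $(P_1-P_2)^2$, a cosmetic difference), with the remaining polynomial piece $T-\1$ handled by direct kernel composition. Your explicit identity $T-\1=(\1-2P_1)(P_1-P_2)$ and the bookkeeping of the rate losses simply spell out details the paper leaves implicit.
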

  \begin{proof}
 The function $F(x)=(1-x^2)^{-1/2}-1$ defines an absolutely convergent power series as required by Lemma \ref{lemmahc1}, with convergence radius $1$. The only "non-trivial" term in $U-\1$ is the one containing the factor $F(P_1-P_2)$, which according to Lemma~\ref{lemmahc1}, it has a jointly continuous and exponentially decaying integral kernel.  
  \end{proof}

\begin{lemma}
	\label{MainLemma}{\cite[Lemma 2.1]{CorneanMonacoMoscolari2021}}
	Let $P_1$, $P_2$ and $U$ as in  Lemma \ref{MainLemma'}. For $L\geq 1$, let $\chi_L$ be the characteristic function of the set $\Lambda_L:=[-L,L]^2$. Then $\chi_L P_j$, $j \in \{1,2\}$, are trace class, and there exists another constant $C_2<\infty $ such that 
	\[
	\left|  \Tr(\chi_L P_1)-\Tr(\chi_L P_2)\right| \leq  L C_2\, .
	\]
	
\end{lemma}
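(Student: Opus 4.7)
The plan is to exploit the intertwining $P_1 = U P_2 U^*$ and the exponential off-diagonal decay of $V := U - \1$ (from Lemma \ref{MainLemma'}) in order to reduce the trace difference to a commutator trace manifestly supported near $\partial \Lambda_L$, thereby extracting the perimeter scaling $O(L)$.

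\emph{Trace-class property.} The factorization $\chi_L P_j \chi_L = (\chi_L P_j)(P_j \chi_L)$ is a product of two Hilbert--Schmidt operators, each with HS-norm bounded by $CL$ via the exponential decay of $P_j$. Hence $\chi_L P_j \chi_L$ is trace class with $\Tr(\chi_L P_j \chi_L) = \int_{\Lambda_L} P_j(\x;\x)\,\mathrm{d}\x$. For $\chi_L P_j$ itself, one splits $\chi_L P_j = \chi_L P_j \chi_L + \chi_L P_j \chi_L^c$ and iteratively uses $P_j = P_j^2$ to absorb the off-diagonal tail into a convergent series of HS--HS products controlled by the exponential decay.

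\emph{Main estimate.} From $P_1 = U P_2 U^*$, cyclicity yields
$$\Tr(\chi_L P_1) - \Tr(\chi_L P_2) = \Tr\bigl((U^* \chi_L U - \chi_L) P_2\bigr) = \Tr\bigl(U^* [\chi_L, V] P_2\bigr) = \Tr\bigl([\chi_L, V] K\bigr),$$
where the identity $U^* \chi_L U - \chi_L = U^* [\chi_L, U] = U^* [\chi_L, V]$ uses $U^* U = \1$, and $K := P_2 U^*$ has an exponentially localized kernel (as a product of such). The kernel of $[\chi_L, V]$ equals $(\chi_L(\x) - \chi_L(\y)) V(\x;\y)$, which is non-zero only when $\x$ and $\y$ lie on opposite sides of $\partial \Lambda_L$. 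Estimating the trace by the integral of the diagonal,
$$|\Tr([\chi_L, V] K)| \leq C \iint |\chi_L(\x) - \chi_L(\y)|\, e^{-\alpha'\|\x-\y\|}\, \mathrm{d}\x\, \mathrm{d}\y.$$
By symmetry the right-hand side is twice $\int_{\Lambda_L} \int_{\Lambda_L^c} e^{-\alpha'\|\x-\y\|}\,\mathrm{d}\y\,\mathrm{d}\x$; for $\x \in \Lambda_L$ the inner integral is at most $C e^{-\alpha' d(\x, \partial\Lambda_L)/2}$, and integrating over $\Lambda_L$ confines the contribution to an $O(1)$-neighborhood of $\partial \Lambda_L$ (of perimeter $O(L)$), yielding the bound $O(L)$.

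\emph{Main obstacle.} The delicate step is justifying the cyclic rearrangements rigorously, since intermediate operators such as $U^* \chi_L U P_2$ are a priori only Hilbert--Schmidt. This can be handled by first establishing that $[\chi_L, V]$ itself is trace class with trace-norm $O(L)$: partitioning $\R^2$ into unit squares, only $O(L)$ pairs contribute by the boundary-support constraint, and the exponential decay of $V$ converts Hilbert--Schmidt estimates on each piece into a summable trace-norm bound. Then $U^* [\chi_L, V] P_2$ is bounded-times-trace-class-times-bounded, hence trace class, and genuine cyclicity applies to legitimize the chain of equalities above.
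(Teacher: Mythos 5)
Your proof is correct and follows essentially the same route as the paper: intertwine via the Kato--Nagy unitary, use cyclicity of the trace to reduce the difference to a commutator with $V=U-\1$, and bound the resulting boundary-layer double integral $\iint|\chi_L(\x)-\chi_L(\y)|e^{-\alpha'\|\x-\y\|}\,\mathrm{d}\x\,\mathrm{d}\y$ by $O(L)$. Two minor points of polish: the paper settles the trace-class claim for $\chi_L P_j$ more cleanly by the weighted factorization $\chi_L P_j=\big(\chi_L P_j e^{\alpha'\|\cdot\|/2}\big)\big(e^{-\alpha'\|\cdot\|/2}P_j\big)$ into two Hilbert--Schmidt factors (your ``iterative absorption'' sketch would need something of this sort to close), and your remark that $K=P_2U^*$ is exponentially localized ``as a product of such'' should instead go through $P_2U^*=P_2+P_2(U^*-\1)$, since $U^*$ itself is not an integral operator with a decaying kernel.
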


\begin{proof} We may write 
$$\chi_L P_j= \big (\chi_L P_j e^{ \alpha'\|\cdot\|/2}\big ) \, \big (e^{-{ \alpha'}  \|\cdot\|/2}P_j\big )$$
hence $\chi_LP_j$ is a product of two Hilbert-Schmidt operators, thus trace class. Moreover, we have 
$$\chi_LUP_j=\chi_LP_j +\big (\chi_L(U-\1)e^{ \alpha' \|\cdot\|/2}\big ) \, \big (e^{-\alpha'  \|\cdot\|/2}P_j\big ) \, ,$$
thus $\chi_LUP_j$ are also trace class. We also have the identity 
$$\chi_LP_1-\chi_LP_2=\chi_L U P_2 U^*-\chi_L P_2 U^*U \, .$$
By exploiting the invariance under unitary conjugation of the trace, we obtain the equality \mbox{$\Tr (\chi_L P_2 U^*U)=\Tr (U\chi_L P_2 U^*)$}, thus
	$$\Tr(\chi_L P_1)-\Tr(\chi_L P_2)=\Tr \left ( [\chi_L,U]P_2U^*\right ).$$
	Denoting by $W:=U-\mathbf{1}$ we see that $W(\x;\x')$ is exponentially localized near the diagonal and
	$$\Tr \left ( [\chi_L,U]P_2U^*\right )=\Tr \left ( [\chi_L,W]P_2\right )+\Tr \left ( [\chi_L,W]P_2 W^*\right ).
    $$
	Both traces on the r.h.s. can be computed as an integral over $\R^2$ of the diagonal value of the corresponding integral kernels. Furthermore, both traces can be bounded by a double  integral of the type 
	\begin{equation}
 \label{eq:DoubleInt}
 \int_{\x\in \R^2}\int_{{\x'}\in \R^2}e^{-{\alpha'} \Vert \x-\x'\Vert}|\chi_L(\x)-\chi_L(\x')|\mathrm{d}\x' \mathrm{d}\x\; .
 \end{equation}
	In the integral \eqref{eq:DoubleInt}, the integrand is non-zero only if one variable belongs to $\Lambda_L$ and the other one lies outside $\Lambda_L$. Then, due to symmetry, it is enough to estimate the integral
	$$\int_{\x\in \Lambda_L}\int_{{\x'}\in \R^2\setminus \Lambda_L}e^{-\alpha'  \Vert \x-\x'\Vert}\mathrm{d}\x' \mathrm{d}\x\; .$$
	Notice that for a fixed $\x\in \Lambda_L$ we have the inequality
	$$e^{- \alpha'  \Vert \x-\x'\Vert}\leq e^{-\alpha' \;  {\rm dist}(\x,\partial \Lambda_L)/2}e^{- \alpha'  \Vert \x-\x'\Vert /2},\quad \forall \, \x'\in \R^2\setminus \Lambda_L,$$
    where we have denoted by $\partial \Lambda_L$ the boundary of the set $\Lambda_L$.
	Finally, by integrating with respect to $\x'$ at fixed $\x$ we can bound the above double integral by
	$$\int_{\x\in \Lambda_L}e^{- \alpha' \;  {\rm dist}(\x,\partial \Lambda_L)/2}\mathrm{d}\x\leq C L \,,$$
	thus finishing the proof.  
\end{proof}

\section{Integral kernel estimates for the resolvent}
\label{sec:Estimates}
In this section we briefly show how to prove the estimates \eqref{IntKernelResolvent} and \eqref{eq:estimateResolvent}.  Let us recall our assumptions on the scalar and magnetic potentials: we assume that the scalar potential $V$ is in $L^{2+\delta}_{\mathrm{u.loc}}(\R^2)$, for $\delta>0$, which means that it is uniformly in $L^{2+\delta}_{\rm loc}(\R^2)$, namely
\begin{equation}
\label{eq:uloc}
\|V\|_{2+\delta, \textrm{unif}}:=\sup_{\y \in \R^2} \int_{\|\x-\y\|\leq 1}  |V(\x)|^{2+\delta} \mathrm{d}\x < +\infty \, ,
\end{equation}
and $A,\mathcal{A}\in C^{\infty}(\R^2,\R^2)$. 

With this assumption, it is a standard result that $V$ is infinitesimally operator bounded with respect to the free Laplacian \cite{ReedSimonIV}. Moreover, since the magnetic potential is smooth, we have that $H_b$ satisfies the diamagnetic inequality \cite{AvronHerbstSimon}, this means that $V$ is also infinitesimally operator bounded with respect to the magnetic Laplacian $(-\iu \nabla -bA-\mathcal{A})^2$. From  \cite{BroderixHundertmarkLeschke,Simon1982} (see also \cite[Proposition 5.5]{Nenciu2002}) we get that the resolvent of $H_b$ has an integral kernel that is jointly continuous outside of the diagonal. Without entering into the details of the argument, the main idea is to analyze the Schr\"odinger semigroup, namely the operator $e^{-tH_b}$, for $t>0$, and then, for $\lambda>0$ large enough, to use the Laplace transform:
$$
(H_b+\lambda{\bf{1}})^{-1}=\int_{0}^{+\infty}  e^{-t H_b} e^{-t \lambda} \, \mathrm{d}t\, .
$$
 The key observation lies in the fact that the diamagnetic inquality allows a pointwise upper bound of the Schr\"odinger semigroup of $H_b$ provided by the Schr\"odinger semigroup of the free Laplacian, which is explicitly known. This leads to \eqref{IntKernelResolvent} with $z$ replaced by $-\lambda$. The estimate for an arbitrary $z$ is then obtained by repeatedly using the first resolvent identity. 

It remains to show \eqref{eq:estimateResolvent}, which is a bit more delicate. First, notice that for smooth and bounded magnetic fields, the same estimate has been proved in \cite[Proposition B.9]{CorneanFournaisFrankHelffer}, in particular for the purely magnetic operator $h_b:=(-\iu \nabla -b A -\mathcal{A})^2$. From the second resolvent identity we get
$$
(H_b-z{\bf{1}})^{-1}=(h_b-z{\bf{1}})^{-1} + (h_b-z{\bf{1}})^{-1} V (H_b-z{\bf{1}})^{-1} \, ,
$$
thus we are left to show that the integral kernel
$$
\Big (\big (-\iu \nabla_\x -bA-\mathcal{A})\, (h_b-z{\bf{1}})^{-1}\big ) \, V \, (H_b-z{\bf{1}})^{-1}\Big )(\x;\x') 
$$
satisfies \eqref{eq:estimateResolvent}. In view of \eqref{IntKernelResolvent}, it is enough to show that the estimate \eqref{eq:estimateResolvent} is satisfied by the following integral 
$$
\mathfrak{I}(\x,\x'):=\int_{\R^2}  e^{-\alpha\|\x-\y\|}\, \|\x-\y\|^{-1} |V(\y)| e^{-\alpha\|\y-\x'\|}\, \ln (2+\|\y-\x'\|^{-1}) \mathrm{d} \y\, . 
$$
In order to exploit the fact that $V$ is uniformly locally in $L^{2+\delta}(\R^2)$, we start by bounding the integral $\mathfrak{I}(\x,\x')$ by
$$
\mathfrak{I}(\x,\x')\leq \sum_{\gamma \in \Z^2} \mathfrak{I}_\gamma(\x,\x')
$$
where
$$\mathfrak{I}_\gamma(\x,\x'):= \int_{\|\y-\gamma \| \leq 1} e^{-\alpha\|\x-\y\|}\, \|\x-\y\|^{-1} |V(\y)| e^{-\alpha|\y-\x'|}\, \ln (2+\|\y-\x'\|^{-1})  \mathrm{d} \y \, .  $$
Then, notice that for $\y$ in the unit ball centered in $\gamma\in \Z^2$  we have 
$$\|\x-\gamma\|\leq \|\x-\y\|+ 1 \, ,$$ 
which, together with the triangle inequality $ \|\x-\x'\|\leq \|\x-\y\|+\|\y-\x'\|$ implies
\begin{equation}\label{eq:integrabilityG}
\begin{aligned}
\left|\mathfrak{I}_\gamma(\x,\x')\right|&\leq C\, e^{-\alpha\|\x-\x'\|/4} e^{-\alpha \|\x-\gamma\|/4} \\
&\qquad \times \int_{\|\y-\gamma\|\leq 1}\|\x-\y\|^{-1} |V(\y)| \, \ln (2+\|\y-\x'\|^{-1})\, \mathrm{d}\y.
\end{aligned}
\end{equation}
The estimate \eqref{eq:integrabilityG}  guarantees the summability in $\gamma$, and the desired exponential decay in $\|\x-\x'\|$. Therefore it remains to show that the remaining integral in \eqref{eq:integrabilityG} cannot diverge worse than $\|\x-\x'\|^{-1}$. 
For this purpose, we may assume that $0<\|\x-\x'\|\leq 1$. Let us then focus on 
$$\int_{\|\y-\gamma\|\leq 1}\|\x-\y\|^{-1} |V(\y)| \, \ln (2+\|\y-\x'\|^{-1})\, \mathrm{d}\y.$$
On the region $\|\y-\x'\|\leq \|\x-\x'\|/2$ we can bound the logarithmic term with $\ln(2+2\|\x-\x'\|^{-1})$ and consider 
\begin{equation}
\label{eq:AuxInt}
\int_{\|\y-\gamma\|\leq 1}\|\x-\y\|^{-1} |V(\y)|\, \mathrm{d}\y.
\end{equation}
The integral \eqref{eq:AuxInt} can then be bounded by using H\"older's inequality with $p=2+\delta $ and $q=(2+\delta)/(1+\delta)\in (1,2)$
and the fact that that $\|\x-\cdot\|^{-1}$ belongs to $L^q_{\rm loc}(\R^2)$. 

On the region $\|\x-\y\|\leq \|\x-\x'\|/2$ we may bound $\|\x-\y\|^{-1}$ by $2\|\x-\x'\|^{-1}$, while the remaining integral can be bounded in the same way as \eqref{eq:AuxInt}, because the logarithm also belongs to $L^q_{\rm loc}(\R^2)$. 

Finally, on the region where both $\|\x-\x'\|/2\leq \|\x-\y\|$ and 
$\|\x-\x'\|/2\leq \|\x'-\y\|$ we can bound the integral by 
$$\ln(2+2\|\x-\x'\|^{-1})\int_{\|\x-\x'|/2\leq \|\y-\x\|\leq 2}\|\x-\y\|^{-1} |V(\y)|\, \mathrm{d}\y ,$$
which, by using again H\"older's inequality with $p=2+\delta $ and $q=(2+\delta)/(1+\delta)\in (1,2)$, can be bounded by 
$$\ln(2+2\|\x-\x'\|^{-1})\, \|V\|_{p,{\rm unif}}\, \Big ( \int_{\|\x-\x'\|/2}^2 r^{-q+1} \mathrm{d} r \Big )^{1/q} \leq C\, \ln(2+2\|\x-\x'\|^{-1}).$$

\subsection*{Acknowledgments}
H.D.C. gratefully acknowledges the financial
support from Grant 2032-00005B of the Danish Council for Independent Research $|$ Natural Sciences. 
M.M. gratefully acknowledges the support of PNRR Italia Domani and Next Generation EU through the ICSC National Research Centre for High Performance Computing, Big Data and Quantum Computing and the support of the MUR grant Dipartimento di Eccellenza 2023–2027.

\phantom{authors}
\vfill
\hspace{-0.3cm}
{\begin{tabular}{rl}
	    (H.~D. Cornean) & \textsc{Department of Mathematical Sciences, Aalborg University} \\
		&  Thomas Manns Vej 23, 9220 Aalborg, Denmark \\
		&  \textsl{E-mail address}: \href{mailto:cornean@math.aau.dk}{\texttt{cornean@math.aau.dk}} \\
		\\
		(M. Moscolari) & \textsc{Dipartimento di Matematica, Politecnico di Milano}\\
		& Piazza Leonardo da Vinci 32, 20133 Milano, Italy \\
		&  \textsl{E-mail address}: \href{mailto:massimo.moscolari@polimi.it}{\texttt{massimo.moscolari@polimi.it}} \\
	\end{tabular}
	
}


\begin{thebibliography}{00}

\normalsize
\bibitem{AvronHerbstSimon}
Avron, J.E., Herbst, I., Simon, B.: 
Schr\"odinger operators with magnetic fields. I. General interactions. {\it Duke Math. J.} {\bf 45}(4): 847--883 (1978)

\bibitem{AvronSeilerSimon1983}
Avron, J.E., Seiler, R., Simon, B.: Homotopy and quantization in condensed matter physics. {\it Phys. Rev. Lett.} {\bf 51}, 51 (1983)

\bibitem{AvronSeilerSimon}
Avron, J.E., Seiler R., Simon, B.: Charge deficiency, charge transport and comparison of dimensions. {\it Commun. Math. Phys.} {\bf 159}, 399–422 (1994)


\bibitem{AvronSimon} Avron, J., Simon, B: Stability of gaps for periodic potentials under variation of a magnetic field. {\it J. Phys. A: Math. Gen.} {\bf 18} 2199--2205 (1985)


\bibitem{Bellissard1986}
Bellissard, J.: Ordinary quantum Hall effect and non-commutative cohomology. In:
{\it Localization in disordered systems.} Weller, W., Zieche, P. (eds.), Leipzig: Teubner (1986)

\bibitem{BellissardVanElstSchulzBaldes}
Bellissard, J., van Elst, A., Schulz-Baldes, H.: The noncommutative geometry
of the quantum Hall effect. {\it J. Math. Phys.} {\bf 35}, 5373 (1994)





\bibitem{BrietCornean2002}
Briet, P., Cornean H.D.: Locating the spectrum for magnetic Dirac and Schr\"odinger operators. {\it Commun. P.D.E.} {\bf 27}, 1079--1101 (2002) 

\bibitem{BroderixHundertmarkLeschke}
Broderix, K., Hundertmark, D., Leschke, H.: Continuity properties of Schr\"odinger semigroups with magnetic fields. {\it Rev. Math. Phys.} {\bf 12}, 181--225 (2000)

\bibitem{Cornean2010} Cornean, H.D.: On the Lipschitz continuity of spectral bands of Harper-like and magnetic Schr\"odinger operators. {\it Ann. Henri Poincaré} {\bf 11}, 973--990 (2010)


\bibitem{CorneanFournaisFrankHelffer}
Cornean, H.D., Fournais, S., Frank, R. L., Helffer, B.: Sharp trace asymptotics for a class of 2D-magnetic
operators. {\it Ann. de l’Institut Fourier} {\bf 63}, 2457--2513 (2013)

\bibitem{CorneanGardeStottrupSorensen}
Cornean, H.D., Garde, H., Støttrup, B., Sørensen, K.S.: Magnetic pseudodifferential operators represented as generalized Hofstadter-like matrices. {\it J. Pseudo-Differ. Oper. Appl.} {\bf 10}, 307--336 (2019)

\bibitem{CorneanHerbstNenciu} Cornean, H.D., Herbst, I., Nenciu, G. : On the construction of composite Wannier functions. {\it Ann. Henri Poincaré} {\bf 17}, 3361--3398 (2016)



\bibitem{CorneanMonacoMoscolari2019}Cornean, H.D., Monaco, D., Moscolari, M.: Parseval frames of exponentially
localized magnetic Wannier functions. {\it Commun. Math. Phys.} {\bf 371}, 1179--1230 (2019)

\bibitem{CorneanMonacoMoscolari2021} Cornean, H.D., Monaco, D., Moscolari, M.: Beyond Diophantine Wannier diagrams: Gap labelling for Bloch-Landau Hamiltonians. {\it J. Eur. Math. Soc.} {\bf 23}, 3679--3705 (2021)

\bibitem{CorneanMoscolariSorensen}
Cornean, H.D., Moscolari, M., Sørensen, K. S.: Bulk-edge correspondence for unbounded Dirac-Landau operators. {\it J. Math. Phys.} {\bf 64}, 021902 (2023)



\bibitem{CorneanMoscolariTeufel}
Cornean, H.D., Moscolari, M., Teufel, S.: From orbital magnetism to bulk-edge correspondence. {\it Ann. Henri Poincaré.} { \bf 26}, 3579--3633 (2025)


\bibitem{CorneanNenciu1998}
Cornean, H.D., Nenciu, G.: 
On eigenfunction decay of two dimensional magnetic Schr\"odinger operators.
{\it Commun. Math. Phys.}, {\bf  192}, 671--685 (1998)

\bibitem{CorneanNenciu2000}
Cornean, H.D., Nenciu, G. Two dimensional magnetic Schr\"odinger operators: Width of mini bands in the tight binding approximation. {\it Ann. Henri Poincaré} {\bf 1}, 203--222 (2000)

\bibitem{CorneanNenciu2009}
Cornean, H.D., Nenciu, G.: The Faraday effect revisited: Thermodynamic limit. {\it J. Funct. Anal.} {\bf 257}, 2024--2066 (2009)



\bibitem{CorneanPurice}
Cornean H.D., Purice, R.: Sharp spectral stability for a class
of singularly perturbed pseudo-differential operators.
{\it J. Spectr. Theory} {\bf 13},  1129--1144 (2023)

\bibitem{DeRoeckElgartFraas}
De Roeck, W., Elgart, A., Fraas, M.: Derivation of Kubo’s formula for disordered systems at zero temperature. {\it Invent. math.} {\bf 235}, 489-568 (2024)


\bibitem{ElbauGraf}
Elbau, P., Graf, G.M.: Equality of bulk and edge Hall conductances revisited. {\it Commun. Math. Phys.} {\bf 229}, 415--432 (2002)

\bibitem{ElgartGrafSchenker}
Elgart, A., Graf, G.M., Schenker, J: Equality of the bulk and the edge Hall conductances in a mobility gap. {\it Commun.\ Math.\ Phys.} {\bf 259}, 185--221 (2005)


\bibitem{Feshbach}
Feshbach H.: Unified theory of nuclear reactions. {\it Ann. Physics} {\bf 5}(4), 357--390 (1958)

\bibitem{Fock}
Fock, V.: Bemerkung zur Quantelung des harmonischen Oszillators im Magnetfeld. {\it Z. Physik} {\bf 47}, 446-448 (1928)

\bibitem{HalmosSunder}
Halmos, P. R., Sunder, V. S.: {\it Bounded integral operators on $L^2$ spaces}. Ergebnisse der Mathematik und ihrer Grenzgebiete (Results in Mathematics and Related Areas), vol. 96, Springer-Verlag, Berlin, (1978)

\bibitem{Kato}Kato, T. : {\it Perturbation Theory for Linear Operators}. Springer, Berlin (1966)

\bibitem{KellendonkSchulzBaldes}
Kellendonk, J., Schulz-Baldes, H.: Quantization of edge currents for continuous magnetic operators. {\it J. Funct. Anal.} {\bf 209}, 388--413 (2004)



\bibitem{Landau}
Landau, L.D.: Diamagnetismus der Metalle. {\it Z. Physik} {\bf 64}, 629-637 (1930)

\bibitem{LandauLifshitz}
Landau, L.D., Lifschitz, E. M.: {\it Quantum Mechanics: Non-relativistic Theory. Course of Theoretical Physics}. Vol. 3 (3rd ed. London: Pergamon Press) (1977)

\bibitem{MarcelliMonaco}
Marcelli, G., Monaco, D.: Purely linear response of the quantum Hall current to space-adiabatic perturbations. {\it Lett. Math. Phys.} {\bf 112}, 91 (2022)

\bibitem{MarcelliMoscolariPanati}
Marcelli, G., Moscolari, M., Panati, G.: Localization of generalized Wannier
bases implies Chern triviality in non-periodic insulators. {\it Ann. Henri Poincaré} {\bf 24}, 895--930 (2023)

\bibitem{MoscolariMonaco}
Monaco, D., Moscolari, M.: St\v reda formula for charge and spin currents. {\it Rev. Math. Phys.} {\bf 33}, 2060003
(2021)

\bibitem{MoscolariPanati2019}
Moscolari, M., Panati, G.: Symmetry and localization for magnetic Schr\"odinger operators: Landau levels,
Gabor frames, and all that. {\it Acta Appl. Math.} {\bf 162} 105--120 (2019)

\bibitem{Nenciu1986} Nenciu, G.: Stability of energy gaps under variation of the magnetic field. {\it Lett. Math. Phys.} {\bf 11}, 127--132 (1986)

\bibitem{Nenciu1991} 
Nenciu, G.: Dynamics of band electrons in electric and magnetic fields: Rigorous
justification of the effective Hamiltonians. {\it Rev. Mod. Phys.} {\bf 63}, 91--127 (1991).

\bibitem{Nenciu1993}
Nenciu, G.: Linear adiabatic theory. Exponential estimates. {\it Commun. Math. Phys.} {\bf 152}, 479–496 (1993)

\bibitem{Nenciu2002}
Nenciu, G.: On asymptotic perturbation theory for quantum mechanics: Almost invariant
subspaces and gauge invariant magnetic perturbation theory. {\it J. Math. Phys.} {\bf 43} (3), 1273--1298 (2002)



\bibitem{Panati}
Panati, G.: Triviality of Bloch and Bloch–Dirac bundles. {\it Ann. Henri Poincaré} {\bf 8}, 995–1011 (2007)

\bibitem{ReedSimonIV}
Reed, M., Simon, B.: {\it Methods of modern mathematical physics, vol. IV, Analysis of operators.} Academic Press, New York (1978)

\bibitem{Savoie}
Savoie, B.: On the atomic orbital magnetism: A rigorous derivation of the Larmor and Van Vleck contributions. {\it Ann. Henri Poincaré} {\bf 16}, 45–97 (2015)

\bibitem{SchulzBaldesKellendonkRichter}
Schulz-Baldes, H., Kellendonk, J, Richter, T. : Simultaneous quantization of edge and bulk Hall conductivity. {\it J. Phys. A} {\bf 33}, L27--L32 (2000)

\bibitem{Simon1982}
Simon, B.: Schr\"odinger semigroups. {\it Bull. Am. Math. Soc.} {\bf 7}, 447 (1982)

\end{thebibliography}
\end{document}